\setlist[enumerate]{leftmargin=*}
\setlist[itemize]{leftmargin=*}
\theoremstyle{definition}
\newtheorem{observation}[theorem]{Observation}
\title{CLS: New Problems and Completeness}
\titlerunning{CLS: New Problems and Completeness} 
\author[1]{John Fearnley}
\author[2]{Spencer Gordon}
\author[2]{Ruta Mehta}
\author[1]{Rahul Savani}
\affil[1]{University of Liverpool, Liverpool, UK\\
  \texttt{\{john.fearnley, rahul.savani\}@liverpool.ac.uk}}
\affil[2]{University of Illinois at Urbana-Champaign, Urbana IL, USA\\
  \texttt{rutamehta@cs.illinois.edu, slgordo2@illinois.edu}}
\authorrunning{J.\ Fearnley, S.\ Gordon, R.\ Mehta, and R.\ Savani} 
\subjclass{F.1.3 Complexity Measures and Classes}
\keywords{CLS, PPAD, PLS, Contraction Map, P-matrix LCP}
\newcommand{\ra}{\rightarrow}
\long\def\symbolfootnote[#1]#2{\begingroup%
\def\thefootnote{\fnsymbol{footnote}}\footnote[#1]{#2}\endgroup}
\def\cc#1{\mathsf{#1}}
\def\CLS{\ensuremath{\cc{CLS}}\xspace}
\def\NP{\ensuremath{\cc{NP}}\xspace}
\def\TFNP{\ensuremath{\cc{TFNP}}\xspace}
\def\PPAD{\ensuremath{\cc{PPAD}}\xspace}
\def\PLS{\ensuremath{\cc{PLS}}\xspace}
\def\PPADPLS{\ensuremath{\cc{PPAD} \cap \cc{PLS}}\xspace}
\def\problem#1{\textsc{#1}}
\def\CM{\problem{Contraction}\xspace}
\def\GCM{\problem{GeneralContraction}\xspace}
\def\MMCM{\problem{MetametricContraction}\xspace}
\def\EOL{\problem{EndOfLine}\xspace}
\def\EOPL{\problem{EndOfPotentialLine}\xspace}
\def\EOML{\problem{EndOfMeteredLine}\xspace}
\def\CLO{\problem{ContinuousLocalOpt}\xspace}
\def\PLCP{\problem{P-LCP}\xspace}
\def\ContractionMap{\problem{ContractionMap}\xspace}
\def\MBanach{\problem{MetricBanach}\xspace}
\def\e{\epsilon}
\def\ite{\mbox{ItoE}}
\def\eti{\mbox{EtoI}}
\def\pot{\mbox{$V$}}
\def\isvalid{\mbox{IsValid}}
\def\PLo{\mbox{Q1}}
\def\PLt{\mbox{Q2}}
\def\Real{\mathbb{R}}
\def\Set#1{\left\{ #1 \right\}}
\def\Abs#1{\left| #1 \right|}
\def\Norm#1{\left\| #1 \right\|}
\def\Bigbar#1{\mathrel{\left|\vphantom{#1}\right.\n@space}}
\def\vert{\operatorname{\mathsf{vert}}}
\def\begin@lgo{\begin{minipage}{1in}\begin{tabbing}
		\quad\=\qquad\=\qquad\=\qquad\=\qquad\=\qquad\=\qquad\=\qquad\=\qquad\=\qquad\=\qquad\=\qquad\=\qquad\=\kill}
\def\end@lgo{\end{tabbing}\end{minipage}}
\newenvironment{algorithm}
{\begin{tabular}{|l|}\hline\begin@lgo}
{\end@lgo\\\hline\end{tabular}}
\newcommand{\CPol}{\mbox{${\mathcal P}$}}
\newcommand{\CI}{\mbox{${\mathcal I}$}}
\newcommand{\CL}{\mbox{${\mathcal L}$}}
\newcommand{\CE}{\mbox{${\mathcal E}$}}
\newcommand{\CQ}{\mbox{${\mathcal Q}$}}
\newcommand{\yy}{\mbox{\boldmath $y$}}
\newcommand{\bb}{\mbox{\boldmath $b$}}
\newcommand{\uu}{\mbox{\boldmath $u$}}
\newcommand{\vv}{\mbox{\boldmath $v$}}
\newcommand{\qq}{\mbox{\boldmath $q$}}
\newcommand{\xx}{\mbox{\boldmath $x$}}
\newcommand{\one}{\mbox{\boldmath $1$}}
\newcommand{\ones}{\mbox{\boldmath $1$}}
\newcommand{\zeros}{\mbox{\boldmath $0$}}
\newcommand{\MM}{\mbox{$M$}}
\newcommand{\ps}{\mbox{\boldmath $s$}}
\newcommand{\pq}{\mbox{\boldmath $q$}}
\begin{document}

\maketitle



\begin{abstract}
The complexity class \CLS was introduced by Daskalakis and Papadimitriou in
	\cite{daskalakis2011continuous} with the goal of capturing the complexity of
	some well-known problems in $\PPAD \cap \PLS$ that have resisted, in some
	cases for decades, attempts to put them in polynomial time.  No complete
	problem was known for~\CLS, and in \cite{daskalakis2011continuous}, the problems
	\CM, i.e., the problem of finding an approximate fixpoint of a contraction map,
	and \PLCP, i.e., the problem of solving a P-matrix Linear Complementarity
	Problem, were identified as prime candidates. 

First, we present a new \CLS-complete problem \MMCM, which is closely related to
	the \CM. Second, we introduce \EOPL, which captures aspects of \PPAD and
	\PLS directly via a monotonic directed path, and show that \EOPL is in \CLS via
	a two-way reduction to \EOML. The latter was defined
	in~\cite{hubavcek2017hardness} to keep track of how far a vertex is on the
	\PPAD path via a restricted potential function.  Third, we reduce \PLCP to
	\EOPL, thus making \EOPL and \EOML at least as likely to be hard for \CLS as
	\PLCP. This last result leverages the monotonic structure of Lemke paths for
	\PLCP problems, making \EOPL a likely candidate to capture the exact
	complexity of \PLCP; we note that the structure of Lemke-Howson paths for
	finding a Nash equilibrium in a two-player game very directly motivated the
	definition of the complexity class \PPAD, which eventually ended up
	capturing this problem's complexity exactly. 

%
%
\end{abstract}

\section{Introduction}

The complexity class \TFNP, which stands for total function problems in
\NP, contains search problems that are guaranteed to have a solution, and whose
solutions can be verified in polynomial time~\cite{megiddo1991total}.
While \TFNP is a semantically defined complexity class and is thus unlikely to
contain complete problems, a number of syntactically defined subclasses of
\TFNP have proven very successful at capturing the complexity of total search 
problems.
For example, the complexity class $\cc{PPAD}$, introduced in
\cite{papadimitriou1994complexity} to capture the difficulty of search problems
that are guaranteed total by a parity argument, attracted intense attention in
the past decade culminating in a series of papers showing that the problem of
computing a Nash-equilibrium in two-player games is $\cc{PPAD}$-complete
\cite{chen2009settling,daskalakis2009complexity}. There are no known
polynomial-time algorithms for $\cc{PPAD}$-complete problems, and recent work
suggests that no such algorithms are likely to exist~\cite{bitansky2015cryptographic,garg2016revisiting}. 
The class of problems that
can be solved by local search (in perhaps exponentially-many steps), $\cc{PLS}$,
has also attracted much interest since it was introduced in
\cite{johnson1988easy}, and looks similarly unlikely to have polynomial-time
algorithms. Examples of problems that are complete for \PLS include the problem
of computing a pure Nash equilibrium in a congestion
game~\cite{fabrikant2004complexity} and computing a locally optimal max
cut~\cite{schaffer1991simple}.

If a problem lies in both \PPAD and \PLS then it is unlikely to be complete for 
either class, since this would imply a extremely surprising containment of one class in the other.
Motivated by the existence of several total function problems in \PPADPLS
that have resisted researchers attempts to design polynomial-time algorithms,
in their 2011 paper \cite{daskalakis2011continuous}, Daskalakis and Papadimitriou introduced
the class $\CLS$, a syntactically defined subclass of $\cc{PPAD} \cap \cc{PLS}$.
\CLS is intended to capture the class of optimization problems over a continuous
domain in which a continuous potential function is being minimized and the
optimization algorithm has access to an continuous improvement function.
Daskalakis and Papadimitriou showed that many classical problems of unknown
complexity were shown to be in $\CLS$ including the problem of solving a simple
stochastic game, the more general problem of solving a Linear Complementarity
Problem with a P-matrix, and the problem of finding an approximate fixpoint
to a contraction map. Moreover, $\CLS$ is the smallest known subclass of
$\cc{TFNP}$ and hardness results for it imply hardness results for
$\cc{PPAD}$ and $\cc{PLS}$ simultaneously. 


Recent work by Hub\'{a}\v{c}ek and Yogev~\cite{hubavcek2017hardness} proved 
lower bounds for \CLS. They introduced
a problem known as $\EOML$ which they showed was in $\CLS$, and for which they
proved a query complexity lower bound of $\Omega(2^{n/2}/\sqrt{n})$ and
hardness under the assumption that there were one-way permutations and
indinstinguishability obfuscators for problems in $\cc{P_{/poly}}$.
Another recent result showed that the search version of the Colorful
Carath\'eodory Theorem is in $\cc{PPAD} \cap \cc{PLS}$, and left open
whether the problem is also in $\CLS$~\cite{colorfulcara2017}.

Unfortunately \CLS is not particularly well-understood, and a glaring
deficiency is the current lack of any complete problem for the class. In their
original paper, Daskalakis and Papadimitriou suggested two natural candidates
for complete problems for \CLS, \ContractionMap and \PLCP, and this remains an
open problem.
Another motivation for studying these two problems is that the problems of
solving Condon's simple stochastic games can be reduced to each of them
(separately) in polynomial time
and, in turn, there is sequence of polynomial-time reductions from parity games 
to mean-payoff games to discounted games to simple stochastic 
games~\cite{puri1996theory,gartner2005simple,jurdzinski2008simple,zwick1996complexity,hansen2013complexity}.
The complexity of solving these problems is unresolved and has received 
much attention over many years~(see, for example, 
\cite{zwick1996complexity,condon1992complexity,fearnley2010linear,jurdzinski1998deciding,bjorklund2004combinatorial,fearnley2016complexity}).
In a recent breakthrough, a quasi-polynomial time algorithm for parity games
was presented~\cite{parity}.
For mean-payoff, discounted, and simple stochastic games, the best-known 
algorithms run in subexponential time~\cite{ludwig1995subexponential}.
The existence of a polynomial time algorithm for solving any of these games
would be a major breakthrough.
For \ContractionMap and \PLCP no subexponential time algorithms 
are known, and providing such algorithms would be a major breakthrough.
As the most general of these problems, and thus most likely to be 
\CLS-hard, we study \ContractionMap and \PLCP.

\noindent
\textbf{Our contribution.}
We make progress towards settling the complexity of both of these problems. In
the problem $\ContractionMap$, as defined in~\cite{daskalakis2011continuous}, we
are asked to find an approximate fixed point of a function $f$ that is purported
to contracting with respect to a metric induced by a norm (where the choice of
norm does not matter but is not part of the input), or to give a violation of
the contraction property for $f$. We introduce a problem, \MMCM, that allows
the specification of a purpoted meta-metric $d$ as part of the input of the problem,
along with the function $f$. We are asked to either find an approximate fixed
point of~$f$, a violation of the contraction property for~$f$ with respect to
$d$, a violation of the Lipschitz continuity of $f$ or $d$, or a witness that $d$ violates 
the meta-metric properties.
We show that \MMCM is \CLS-complete, thus identifying a first natural \CLS-complete problem.
We note that, contemporaneously and independently of our work, Daskalakis, Tzamos, and
Zampetakis~\cite{DTZ17} have defined the problem \MBanach and shown it is in \CLS-complete.
Their \CLS-hardness reduction produces a metric and 
is thus stronger than our \CLS-hardness result for \MMCM. 
We discuss both results in more detail in Section~\ref{sec:MMCMisCLScomplete}.

Our second result is to show that \PLCP can be reduced to \EOML.
The \EOML problem was introduced to capture problems that have a PPAD directed
path structure while that also allow us to keep count of exactly how far the vertex is from
the start of the path. In a sense, this may seem rather unnatural, as many common
problems do not seem to have this property. In particular, while the \PLCP problem
has a natural measure of progress towards a solution given by Lemke's algorithm, 
this is given in the form of a potential function, rather than an exact measure
of the number of steps from the beginning of the algorithm.

To address this, we introduce a new problem \EOPL which captures problems with a
PPAD path structure that also allow have a potential function that decreases
along this path. It is straightforward to show that \EOPL is more general than
\EOML. However, despite its generality, we are also able to show that \EOPL can
be reduced to \EOML in polynomial time, and so the two problems are equivalent
under polynomial time reductions.  We show that \PLCP can be reduced to \EOPL,
which provides an alternative proof that $\PLCP$ is in $\CLS$.

We believe that the \EOPL problem is of independent interest, as it naturally
unifies the circuit-based view of $\cc{PPAD}$ and of $\cc{PLS}$, and is defined
in the spirit of the canonical definitions of $\cc{PPAD}$ and $\cc{PLS}$.  There
are two obvious lines for further research.  Given the reduction we provide,
\EOPL and \EOML, are more likely candidates for $\CLS$-hardness than $\PLCP$. 
Alternatively, one could attempt to reduce \EOPL to \PLCP, thereby showing that
that \PLCP is complete for the complexity class defined by these two problems,
and in doing so finally resolve the long-standing open problem of the complexity
of \PLCP.
We note that, in the case of finding a Nash equilibrium of a two-player game,
which we now know is
\PPAD-complete~\cite{chen2009settling,daskalakis2009complexity}, the definition
of \PPAD was inspired by the path structure of the Lemke-Howson algorithm, as
our definition of \EOPL is directly inspired by the path structure of Lemke
paths for P-matrix LCPs.







\section{Preliminaries}


In this section, we define polynomial-time reductions between total search problems
and the complexity class \CLS.
\begin{definition}
For total functions problems, a (polynomial-time) reduction from problem~$A$ to
problem $B$ is a pair of polynomial-time functions $(f,g)$, such that $f$ 
maps an instance $x$ of $A$ to an instance $f(x)$ of $B$, and $g$ maps
any solution $y$ of $f(x)$ to a solution $g(y)$ of $x$.
\end{definition}
Following~\cite{daskalakis2011continuous}, we define the complexity class $\CLS$
as the class of problems that are reducible to the following problem \CLO.

\begin{definition}[\CLO~\cite{daskalakis2011continuous}]
\label{def:CLO}
Given two arithmetic circuits computing functions $f : [0,1]^3\to [0,1]^3$ and $p :
[0,1]^3 \to [0,1]$ and parameters $\e, \lambda > 0$, find either:
\begin{enumerate}[leftmargin=*,label=(C\arabic*)]
\item a point $x\in [0,1]^3$ such that $p(x) \leq p(f(x)) - \e$ or \label{c_fixpoint}
\item a pair of points $x,y\in [0,1]^3$ satisfying either \label{c_violation}
  \begin{enumerate}[label=(C\arabic{enumi}\alph*)] 
  \item $\Norm{f(x) - f(y)} > \lambda \Norm{x-y}$ or \label{c_bad_f}
  \item $\Norm{p(x) - p(y)} > \lambda \Norm{x-y}$. \label{c_bad_p}
  \end{enumerate}
\end{enumerate}
\end{definition}

In Definition~\ref{def:CLO}, $p$ should be thought of as a \emph{potential}
function, and $f$ as a \emph{neighbourhood} function that gives a candidate
solution with better potential if one exists. Both of these functions are 
purported to be Lipschitz continuous. A solution to the problem is either an approximate
potential minimizer or a witness for a violation of Lipschitz continuity.

\begin{definition}[\CM~\cite{daskalakis2011continuous}]
We are given as input an arithmetic circuit computing $f: [0,1]^3\to [0,1]^3$,
a choice of norm $\Norm{\cdot}$, constants \mbox{$\e,c \in (0,1)$},
and $\delta > 0$, and we are promised that $f$ is $c$-contracting w.r.t. $\Norm{\cdot}$.
The goal is to find
\begin{enumerate}[label=(CM\arabic*)]
\item a point $x\in [0,1]^3$ such that $d(f(x),x) \leq \delta$, 
\item or two points $x,y\in [0,1]^3$ such that $\Norm{f(x) - f(y)}/\Norm{x-y} > c$. 
\end{enumerate}
\end{definition}

In other words, the problem asks either for an approximate fixed point of $f$ or
a violation of contraction. As shown in~\cite{daskalakis2011continuous}, \CM is
easily seen to be in \CLS by creating instances of \CLO with $p(x) =
\Norm{f(x)-x}$, $f$ remains as $f$, Lipschitz constant $\lambda = c+1$, and $\epsilon =
(1-c)\delta$.

\section{\MMCM is CLS-Complete}
\label{sec:MMCMisCLScomplete}

In this section, we define \MMCM and show that it is \CLS-complete.
In a \emph{meta-metric}, all the requirements of a metric are satisfied except
that the distance between identical points is not necessarily zero. The
requirements for $d$ to be a meta-metric are given in the following definition.

\begin{definition}[Meta-metric]
\label{def:metametric}
Let $\mathcal{D}$ be a set and $d:\mathcal{D}^2 \mapsto \Real$ a function such that:
\begin{enumerate}
\item $d(x, y) \ge 0$;
\item $d(x, y) = 0$ implies $x = y$ (but, unlike for a metric, the converse is not required);
\item $d(x, y) = d(y, x)$;
\item $d(x, z) \le d(x, y) + d(y, z)$.
\end{enumerate}
Then $d$ is a meta-metric on $\mathcal{D}$.
\end{definition}

The problem \CM, as defined in~\cite{daskalakis2011continuous}, was inspired by
Banach's fixed point theorem, where the contraction can be with respect to any
metric.  In~\cite{daskalakis2011continuous}, for \CM the assumed metric was any
metric induced by a norm. The choice of this norm (and thus metric) was
considered part of the definition of the problem, rather than part of the
problem input. In the following definition of \MMCM, the contraction is with
respect to a meta-metric, rather than a metric, and this meta-metric is given as part of the input of
the problem.

\begin{definition}[\MMCM]
\label{def:MMCM}
We are given as input an arithmetic circuit computing $f: [0,1]^3\to [0,1]^3$,
an arithmetic circuit computing a meta-metric $d : [0,1]^3\times [0,1]^3 \to
[0,1]$, some $p$-norm $\Norm{\cdot}_r$ and constants \mbox{$\e,c \in (0,1)$}
and $\delta > 0$, and we are promised that $f$ is $c$-contracting with
respect to $d$, and $\lambda$-continuous with respect to $\Norm{\cdot}$, and
that $d$ is $\gamma$-continuous with respect to $\Norm{\cdot}$. The goal is
to find
\begin{enumerate}[label=(M\arabic*)]
\item a point $x\in [0,1]^3$ such that $d(f(x),x) \leq \e$, \label{m_fixpoint}
\item or two points $x,y\in [0,1]^3$ such that \label{m_violation}
  \begin{enumerate}[label=(M\arabic{enumi}\alph*)]
    \item $d(f(x),f(y))/d(x,y) > c$, \label{m_not_contracting}
    \item $\Norm{d(x,y) - d(x',y')}/\Norm{(x,y)-(x',y')} > \delta$, or \label{m_bad_metametric}
    \item $\Norm{f(x) - f(y)}/\Norm{x-y} > \lambda$. \label{m_bad_f}
  \end{enumerate}
\item points $x,y$, or $x,y,z$ in $[0,1]^3$ that witness a violation of one 
	of the four defining properties of a meta-metric (Definition~\ref{def:metametric}). \label{mm_violation}
\end{enumerate}
\end{definition}

\begin{definition}[\GCM]
\label{def:GCM}
The definition is identical to that of Definition~\ref{def:MMCM} identical except for the fact that 
solutions of type \ref{mm_violation} are not allowed.
\end{definition}

So, while \MMCM allows violations of $d$ being a meta-metric
as solutions, \GCM does not. 

\begin{theorem}
\label{thm:GCMinCLS}
  \GCM is in \CLS.
\end{theorem}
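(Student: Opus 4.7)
The plan is to generalise the reduction from \CM to \CLO sketched in the preliminaries. Given a \GCM instance with function $f$, meta-metric $d$, and constants $\e,c,\delta$, I build a \CLO instance by keeping $f$ unchanged and taking the potential $p(x) = d(f(x),x)$, which is computable by composing the circuits for $d$ and $f$ and lies in $[0,1]$ because $d$ does. I set the \CLO progress parameter to $\e' = \e(1-c)$ and the \CLO Lipschitz parameter $\lambda' := \max\{\lambda,\,\max(\gamma,\delta)(\lambda+1)\}$, chosen so that under the \GCM promises ($f$ is $\lambda$-continuous and $d$ is $\gamma$-continuous) the potential $p$ is automatically $\lambda'$-continuous; the key bound is $\Norm{(f(x),x)-(f(y),y)} \le \Norm{f(x)-f(y)} + \Norm{x-y} \le (\lambda+1)\Norm{x-y}$.

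To convert a \CLO solution back into a \GCM solution I handle each case in turn. A type-(C1) point $x$ gives a quantitative bound on $p(x)-p(f(x))$; combined with the contraction estimate $p(f(x)) = d(f(f(x)),f(x)) \le c\cdot d(f(x),x) = c\cdot p(x)$, this forces $(1-c)p(x) \le \e'$, i.e.\ $d(f(x),x) \le \e$, which is an (M1) solution. If instead the contraction estimate happens to fail on the particular pair $(x,f(x))$, one reads off an (M2a) solution $d(f(x),f(f(x)))/d(x,f(x)) > c$ directly; the corner case $d(f(x),x)=0$ gives $f(x)=x$ via meta-metric property~2, again (M1). A (C2a) pair is immediately an (M2c) solution since $\lambda' \ge \lambda$.

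The only slightly delicate case is a (C2b) pair $(x,y)$ witnessing that $p$ fails to be $\lambda'$-Lipschitz. I first test whether $\Norm{f(x)-f(y)} > \lambda\Norm{x-y}$ and, if so, return (M2c). Otherwise $\Norm{(f(x),x)-(f(y),y)} \le (\lambda+1)\Norm{x-y}$, and combining this with the (C2b) inequality yields $|d(f(x),x)-d(f(y),y)|/\Norm{(f(x),x)-(f(y),y)} \ge \lambda'/(\lambda+1) \ge \delta$, which is an (M2b) solution on the pair $((f(x),x),(f(y),y))$. The main bookkeeping task is coordinating the constants $\lambda,\gamma,\delta,c,\e$ so that a single choice of $(\e',\lambda')$ routes every \CLO output to a valid \GCM output; I expect no deeper difficulty, and since both directions of the reduction are clearly polynomial-time, this establishes $\GCM \in \CLS$.
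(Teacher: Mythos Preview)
Your proposal is correct and takes essentially the same route as the paper: define $p(x)=d(f(x),x)$, set $\e'=(1-c)\e$, and choose the \CLO Lipschitz parameter large enough that a \ref{c_bad_p} violation forces either an \ref{m_bad_f} or an \ref{m_bad_metametric} witness via the bound $\Norm{(f(x),x)-(f(y),y)}\le(\lambda+1)\Norm{x-y}$; the paper simply takes $\lambda'=(\lambda+1)\delta$, and your more conservative $\lambda'=\max\{\lambda,\max(\gamma,\delta)(\lambda+1)\}$ only makes the \ref{c_bad_f}$\to$\ref{m_bad_f} step cleaner. Two cosmetic remarks: the factor $\gamma$ in your $\lambda'$ plays no role since \ref{m_bad_metametric} is stated in terms of $\delta$, and in the corner case $d(f(x),x)=0$ you invoke meta-metric property~2, which in \GCM you are not entitled to assume---but the premise $d(f(x),x)=0\le\e$ is already an \ref{m_fixpoint} solution directly, so the detour is superfluous rather than a gap.
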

\begin{proof}
  Given an instance $X=(f,d,\e,c,\lambda,\delta)$ of \GCM, we set $p(x) \triangleq d(f(x),x)$. Then our $\CLO$ instance is the following:
  \[Y=(f, p, \lambda' \triangleq (\lambda + 1) \delta, \e' \triangleq (1-c)\e).\]
Now consider any solution to $Y$. If our solution is of type \ref{c_fixpoint}, a
	point $x$ such that $p(f(x)) > p(x) - \e'$, then we have $d(f(f(x)),f(x))
	> d(f(x),x) - (1-c)\e$, and either $d(f(x),x) \leq \e$, in which case $x$ is
	a solution for $X$, or $d(f(x),x) > \e$. In the latter case, we can divide
	on both sides to get \[ \frac{d(f(f(x)),f(x))}{d(f(x),x)} > 1-
	\frac{(1-c)\e}{d(f(x),x)} \geq 1- (1-c) = c\text{,} \] giving us a violation
	of the claimed contraction factor of $c$, and a solution of type
	\ref{m_not_contracting}.

If our solution is a pair of points $x,y$ of type \ref{c_bad_f} satisfying $\Norm{f(x) - f(y)}/\Norm{x-y} > \lambda' \geq \lambda$, then this gives a violation of the $\lambda$-continuity of $f$. If instead $x,y$ are of type \ref{c_bad_p} so that $\Norm{p(x) - p(y)}/\Norm{x-y} > \lambda'$, then we have
\[ \Abs{d(f(x),x) - d(f(y),y)} = \Abs{p(x) - p(y)} > (\lambda+1)\delta \Norm{x-y}\text{.} \]
We now observe that if
\[ \Abs{d(f(x),x) - d(f(y),y)} \leq \delta (\Norm{f(x)-f(y)} + \Norm{x - y}) \quad \text{and} \Norm{f(x) - f(y)}/\Norm{x-y} \leq \lambda,\] 
	then we would have
\[\Abs{d(f(x),x) - d(f(y),y)} \leq \delta (\Norm{f(x) - f(y)} + \Norm{x-y}) \leq (\lambda + 1)\delta \Norm{x-y},\] 
which contradicts the above inequality, so either the $\delta$ continuity of $d$ must be violated giving a solution to $X$ of type \ref{m_bad_metametric} or the $\lambda$ continuity of $f$ must be violated giving a solution of type \ref{m_bad_f}.
Thus we have shown that $\GCM$ is in $\CLS$.
\end{proof}

Now that we have shown that \GCM is total, we note 
that since the solutions of \GCM are a subset of those for \MMCM, we have the following.

\begin{observation}
\label{obs:MMCMtoGCM}
\MMCM can be reduced in polynomial-time to \GCM.
\end{observation}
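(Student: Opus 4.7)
The plan is to exhibit the trivial reduction $(f, g)$ in which both the instance map $f$ and the solution map $g$ are identities. \MMCM and \GCM share an identical input signature $(f, d, \e, c, \lambda, \delta)$ together with the same promise on $f$ and $d$ (that $f$ is $c$-contracting with respect to $d$, that $f$ is $\lambda$-continuous, and that $d$ is $\gamma$-continuous), so given any \MMCM instance $X$, setting $f(X) = X$ produces a syntactically valid \GCM instance in constant time and thus in polynomial time.

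For the solution map, I would appeal directly to the definitions. A \GCM solution must be of type \ref{m_fixpoint}, \ref{m_not_contracting}, \ref{m_bad_metametric}, or \ref{m_bad_f}, and these are exactly four of the five permissible solution types for \MMCM; the only \MMCM solution type absent from \GCM is \ref{mm_violation}, which is a strict relaxation. Hence $g(y) = y$ transports any \GCM solution on $f(X)$ to a valid \MMCM solution on $X$ without modification.

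To see that the reduction is well-defined, I would invoke Theorem~\ref{thm:GCMinCLS}, which establishes that \GCM is total: every \GCM instance, including $f(X)$, admits a solution of one of the four permitted types. There is no genuine obstacle here, since the observation reduces to the set-theoretic fact that for every common input, the solution set of \GCM is contained in the solution set of \MMCM, combined with totality of \GCM to guarantee that the containment is between non-empty sets. The only point that requires any care is making sure that the promise under which \MMCM is considered (namely that $f$ is contracting and Lipschitz with respect to $d$, and $d$ is Lipschitz) is exactly the promise \GCM inherits, which it is by Definition~\ref{def:GCM}.
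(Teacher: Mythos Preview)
Your proposal is correct and matches the paper's own justification essentially verbatim: the paper simply observes that the solutions of \GCM are a subset of those of \MMCM (the only difference being that type~\ref{mm_violation} solutions are disallowed in \GCM), and invokes the totality of \GCM established in Theorem~\ref{thm:GCMinCLS}, which is precisely the identity reduction you describe.
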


Thus, by Theorem~\ref{thm:GCMinCLS}, we have that \MMCM is in \CLS.
Next, we show that \MMCM is \CLS-hard by a reduction 
from the canonical \CLS-complete problem \CLO to an instance of \MMCM.
By Observation~\ref{obs:MMCMtoGCM}, we then also have that \GCM is \CLS-hard.

\begin{theorem}
\label{thm:MMCMisCLShard}
  $\MMCM$ is $\CLS$-hard.
\end{theorem}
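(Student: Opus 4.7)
The plan is to reduce the canonical $\CLS$-complete problem $\CLO$ to $\MMCM$. Given a $\CLO$ instance $(f, p, \epsilon, \lambda)$, I will construct an $\MMCM$ instance whose neighbourhood function is $f$ itself and whose meta-metric encodes the potential via
\[d(x,y) := K + p(x) + p(y) + \beta\|x-y\|,\]
suitably normalized to take values in $[0,1]$, for small positive constants $K,\beta$ to be fixed below. The remaining $\MMCM$ parameters are set so that $\lambda_{\MMCM}$ inherits from $\lambda$, $\delta$ inherits from the Lipschitz constant of $d$, $c$ is of the form $1 - \Theta(\epsilon)$, and $\epsilon_{\MMCM}$ is strictly less than the minimum value attained by $d$.

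First I would verify that $d$ is a meta-metric: nonnegativity, symmetry, and the triangle inequality are immediate from $K, p(y) \geq 0$ and the triangle inequality for $\|\cdot\|$, while the axiom $d(x,y) = 0 \Rightarrow x=y$ is vacuous since $d \geq K > 0$. In particular, solutions of type~(M3) never occur. The ``easy'' cases are then dispatched uniformly: the choice $\epsilon_{\MMCM} < K$ rules out~(M1); an (M2c) witness is literally a (C2a)-violation of $\lambda$-Lipschitz continuity of $f$; and the bound $|d(x,y) - d(x',y')| \leq (\lambda + \beta)(\|x-x'\| + \|y-y'\|)$, which holds whenever $p$ is $\lambda$-Lipschitz, converts any (M2b) witness into a (C2b)-violation for $p$ at one of its four points.

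The hard part will be extracting a $\CLO$-solution from an (M2a) witness $(x,y)$, i.e.\ a pair with $d(f(x), f(y)) > c\, d(x, y)$. My plan is to argue by contradiction: if neither $x$ nor $y$ is a $\CLO$ (C1)-solution (so $p(f(z)) < p(z) - \epsilon$ for $z \in \{x,y\}$) and no (C2a)-violation holds on $(x,y)$, then expanding the (M2a) inequality gives
\[(1-c)K + \sum_{z \in \{x,y\}}\bigl(p(f(z)) - c\, p(z)\bigr) + \beta\bigl(\|f(x)-f(y)\| - c\|x-y\|\bigr) > 0,\]
and substituting the bounds $p(f(z)) - c\, p(z) < (1-c) - \epsilon$ (from no (C1) together with $p \leq 1$) and $\|f(x)-f(y)\| \leq \lambda\|x-y\|$ upper-bounds the left-hand side by $(1-c)(K+2) - 2\epsilon + O(\beta\lambda)$. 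The delicate step is to choose $K$, $\beta$, and $c$ — for example, $c = 1 - \epsilon/(K+3)$ and $\beta = \Theta(\epsilon/\lambda)$ — so that this bound is strictly negative, yielding the contradiction. Consequently one of $x$, $y$, or the pair $(x,y)$ must witness a (C1)- or (C2a)-solution to the original $\CLO$ instance, and we return whichever passes the obvious polynomial-time test.
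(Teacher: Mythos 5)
Your reduction is essentially the one in the paper: you build the meta-metric by adding a positive constant to $p(x)+p(y)$ (so $d>0$ everywhere and the meta-metric axiom holds vacuously), set $c = 1 - \Theta(\epsilon)$, and decode an (M2a) contraction violation into a (C1) potential-decrease failure at $x$ or $y$, with (M2b)/(M2c) falling through to Lipschitz violations of $p$ or $f$. The only divergence is the superfluous $\beta\|x-y\|$ term in $d$, which the paper omits entirely—setting $\beta = 0$ and $K = 1$ in your construction recovers exactly the paper's $d(x,y)=p(x)+p(y)+1$ with $c = 1-\epsilon/4$, $\delta = \lambda$, and $\epsilon' = \epsilon$, and removes the need to bound the extra $\beta(\|f(x)-f(y)\| - c\|x-y\|)$ term.
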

\begin{proof}
  Given an instance $X = (f,p,\e,\lambda)$ of \CLO, we construct a meta-metric $d(x,y) = p(x) + p(y) + 1$. 
	Since $p$ is non-negative, $d$ is non-negative, and by construction, $d$ is symmetric and satisfies the triangle inequality. Finally, $d(x,y) > 0$ for all choices of $x$ and $y$ so $d$ is a valid meta-metric (Definition~\ref{def:metametric}) Furthermore, if $p$ is $\lambda$-continuous with respect to the given $p$-norm $\Norm{\cdot}_r$, then $d$ is ($2^{1/r-1}\lambda$)-continuous with respect to $\Norm{\cdot}_r$. For clarity, in the below proof we'll omit the subscript $r$ when writing the norm of an expression. To see this we observe that $x,x',y,y'\in [0,1]^n$, we have $\Norm{p(x)-p(x')}/\Norm{x-x'} \leq \lambda$ and $\Norm{p(y) - p(y')}/\Norm{y-y'} \leq \lambda$, so
  \begin{align*}
    \frac{\Norm{d(x,y) - d(x',y')}}{\Norm{(x,y) - (x',y')}}
    &= \frac{\Norm{p(x) - p(x') + p(y) - p(y') + 1 - 1}}{\Norm{(x,y) -(x',y')}}\ \leq\ \frac{\lambda\Norm{x-x'} + \lambda\Norm{y-y'}}{\Norm{(x,y) -(x',y')}}\\
    &\leq \frac{\lambda\Norm{x-x'} + \lambda\Norm{y-y'}}{2^{1-1/r}(\Norm{x-x'} + \Norm{y-y'})} \leq 2^{1/r-1}\lambda \text{.}
  \end{align*}
We'll output an instance $Y = (f,d,\e'=\e,c = 1-\e/4,\delta=\lambda, \lambda'=2^{1/r-1}\lambda)$.

Now we consider solutions for the instance $Y$ and show that they correspond to solutions for our input instance $X$.
First, we consider a solution of type \ref{m_fixpoint}, a point $x\in [0,1]^3$ such that $d(f(x),x) \leq \e'=\e$. We have $p(f(x)) + p(x) + 1 \leq \e$, but this can't happen since $\e < 1$ and $p$ is non-negative, so solutions of this type cannot exist.

Now consider a solution that is a pair of points $x,y\in [0,1]^3$ satisfying one of the conditions in \ref{m_violation}. If the solution is of type \ref{m_not_contracting}, we have $d(f(x),f(y)) > c d(x,y)$, and by our choice of $c$ this is exactly
\[\frac{d(f(x),f(y))}{d(x,y)} > (1-\e/4)\] and
\begin{align*}
  p(f(x)) + p(f(y)) + 1 &> (1-\e/4) (p(x) + p(y) + 1)\\
                        &\geq p(x) + p(y) - 3\e/4
\end{align*}
so either $p(f(x)) > p(x) - \e$ or $p(f(y)) > p(y) - \e$, and one of $x$ or $y$ must be a fixpoint solution to our input instance.
Solutions of type \ref{m_bad_metametric} or \ref{m_bad_f} immediately give us violations of the $\lambda$-continuity of $f$, and thus solutions to $X$.

This completes the proof that $\MMCM$ is $\CLS$-hard.
\end{proof}

So combining these results we have the following.

\begin{theorem}
\MMCM and \GCM are \CLS-complete.
\end{theorem}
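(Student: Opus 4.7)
The plan is simply to assemble the preceding three facts into a single statement. Membership of $\MMCM$ in $\CLS$ follows by composing Observation~\ref{obs:MMCMtoGCM} with Theorem~\ref{thm:GCMinCLS}: the former reduces $\MMCM$ to $\GCM$ in polynomial time, and the latter places $\GCM$ in $\CLS$. The $\CLS$-hardness of $\MMCM$ is exactly Theorem~\ref{thm:MMCMisCLShard}, so both directions for $\MMCM$ are in hand with no further work.

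The only piece that is not literally stated is $\CLS$-hardness of $\GCM$. To obtain this, I would revisit the reduction built in the proof of Theorem~\ref{thm:MMCMisCLShard}. Starting from an arbitrary $\CLO$ instance $(f,p,\e,\lambda)$, that reduction outputs an $\MMCM$ instance whose purported meta-metric is $d(x,y) = p(x) + p(y) + 1$. The key observation is that this particular $d$ is \emph{provably} a meta-metric on $[0,1]^3$ regardless of the input: non-negativity and symmetry are immediate, the triangle inequality reduces to $p \ge 0$, and the axiom $d(x,y) = 0 \Rightarrow x = y$ holds vacuously because $d \ge 1$ pointwise. Hence the produced instance can never admit a solution of type \ref{mm_violation}, so the identical pair of polynomial-time maps is a legal reduction into $\GCM$ as well, and $\GCM$ inherits $\CLS$-hardness from $\CLO$.

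Combining these items yields $\CLS$-completeness of both $\MMCM$ and $\GCM$. I expect no real obstacle here: the only new content beyond the preceding theorems is the remark that the reduction underlying Theorem~\ref{thm:MMCMisCLShard} is oblivious to type-\ref{mm_violation} solutions, and this is visible directly from the closed-form expression for $d$ rather than requiring any additional calculation.
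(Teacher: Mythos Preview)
Your proposal is correct and matches the paper's argument almost exactly. The one minor difference is in how you obtain $\CLS$-hardness of $\GCM$: the paper simply composes the reduction of Theorem~\ref{thm:MMCMisCLShard} ($\CLO$ to $\MMCM$) with Observation~\ref{obs:MMCMtoGCM} ($\MMCM$ to $\GCM$), as noted in the sentence immediately preceding Theorem~\ref{thm:MMCMisCLShard}, whereas you re-inspect the reduction and argue directly that the constructed $d(x,y)=p(x)+p(y)+1$ is always a genuine meta-metric, so type-\ref{mm_violation} solutions cannot arise. Since the reduction underlying Observation~\ref{obs:MMCMtoGCM} is the identity on instances, the two routes yield literally the same map; the paper's phrasing is just a touch more modular, while yours avoids appealing to the totality of $\GCM$ for the hardness direction.
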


Finally, as mentioned in the introduction, we note the following.
Contemporaneously and independently of our work, Daskalakis, Tzamos, and
Zampetakis~\cite{DTZ17} defined the problem \MBanach, which is like \MMCM except
that it requires a metric, as opposed to a meta-metric.  They show that \MBanach
is \CLS-complete.  Since every metric is a meta-metric, \MBanach can be
trivially reduced in polynomial-time to \MMCM. Thus, their \CLS-hardness result
is stronger than our Theorem~\ref{thm:MMCMisCLShard}.
The
containment of \MBanach in \CLS is implied by the containment of \MMCM in \CLS. 
To prove
that \MMCM is in \CLS, we first reduce to \GCM, which we then show is in \CLS.
Likewise, the proof in~\cite{DTZ17} that \MBanach is in \CLS works even when
violations of the metric properties are not required as solutions, so they, like
us, actually show that \GCM is in \CLS.

\section{\EOML to \EOPL and Back}
\label{sec:EOMLtoEOPL}

In this section, we define a new problem \EOPL.
Then, we design polynomial-time reductions from \EOML to \EOPL, and
from \EOPL to \EOML, thereby showing that the two problems 
are polynomial-time equivalent. In Section~\ref{sec:PLCPtoEOPL},
we reduce \PLCP to \EOPL.

First we recall the definition of \EOML, which was
first defined in~\cite{hubavcek2017hardness}.
It is close in spirit to the problem \EOL that is used
to define \PPAD~\cite{papadimitriou1994complexity}. 

\begin{definition}[\EOML~\cite{hubavcek2017hardness}]
Given circuits $S,P: \{0,1\}^n \rightarrow \{0,1\}^n$, and $V:\{0,1\}^n\rightarrow \{0,\dots, 2^n\}$ such that $P(0^n) =0^n\neq S(0^n)$ and $V(0^n)=1$, find a string $\xx \in \{0,1\}^n$ satisfying one of the following 
\begin{enumerate}[label=(T\arabic*)]
\item either $S(P(\xx))\neq \xx \neq 0^n$ or $P(S(\xx))\neq \xx$,
\item $\xx\neq 0^n, V(\xx)=1$,
\item either $V(\xx)>0$ and $V(S(\xx))-V(\xx)\neq 1$, or $V(\xx)>1$ and $V(\xx)-V(P(\xx))\neq 1$. 
\end{enumerate}
\end{definition}
Intuitively, an \EOML is an \EOL instance that is also equipped with an
``odometer'' function. The circuits $P$ and $S$ implicitly define an
exponentially large graph in which each vertex has degree at most 2, just as in \EOL, and condition T1 says that the end of
every line (other than $0^n$) is a solution.
In particular, the string 
$0^n$ is guaranteed to be the end of a line, and so a solution can be found by
following the line that starts at $0^n$.
 The function $V$ is intended to help with this, by giving the number of steps
that a given string is from the start of the line. We have that $V(0^n) = 1$,
and that $V$ increases by exactly 1 for each step we make along the line.
Conditions T2 and T3 enforce this by saying that any violation of the property
is also a solution to the problem.

In \EOML, the requirement of incrementing $V$ by exactly one as walk along the
line is quite restrictive. We define a new problem, \EOPL,  which is similar in
spirit to \EOL, but drops the requirement of always incrementing the potential
by one as we move along the line.

\begin{definition}[\EOPL]
\label{def:EOPL}
Given Boolean circuits $S,P : \Set{0,1}^n \to \Set{0,1}^n$ such that $P(0^n) =0^n\neq S(0^n)$ and a Boolean circuit $V: \Set{0,1}^n \to \Set{0,1,\dotsc,2^m - 1}$ such that $V(0^n) = 0$ find one of the following:
\begin{enumerate}[label=(R\arabic*)]
\item A point $x \in \Set{0,1}^n$ such that $S(P(x)) \neq x \neq 0^n$ or $P(S(x)) \neq x$.
\item A point $x \in \Set{0,1}^n$ such that $x \neq S(x)$, $P(S(x)) = x$, and $V(S(x)) - V(x) \leq 0$.
\end{enumerate}
\end{definition}

The key difference here is that the function $V$ is required to be strictly
monotonically increasing as we walk along the line, but the amount that it
increases in each step is not specified.
At first glance, the definition of \EOPL may seem more general and more likely to 
capture the whole class \CLS. In fact, we will show that \EOML and \EOPL are 
inter-reducible in polynomial-time.
\begin{theorem}
\EOML and \EOPL are equivalent under polynomial-time reductions.
\end{theorem}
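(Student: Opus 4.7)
The plan is to give polynomial-time reductions in both directions, with most of the effort going into the harder \EOPL-to-\EOML direction.

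For \EOML $\le$ \EOPL, I would reuse the same state space, $S$, and $P$ but ``neutralise'' any vertex at which the \EOML potential $V$ vanishes, and shift the potential down by one. Concretely, set $V'(x)=V(x)-1$ when $V(x)\ge 1$ and $V'(x)=0$ otherwise; and set $S'(x)=S(x),P'(x)=P(x)$ when $V(x)\ge 1$, but $S'(x)=P'(x)=x$ otherwise. Since $V(0^n)=1$, this satisfies $V'(0^n)=0$ and keeps $P'(0^n)=0^n\neq S'(0^n)$, as \EOPL demands. An R2 witness cannot occur at a vertex with $V(x)=0$ (such vertices are self-loops under $S'$), so it forces $V(x)\ge 1$, and the resulting inequality $V(S(x))\le V(x)$ is then a T3 witness directly. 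An R1 witness at $x$ splits by a short case analysis on $V(P(x))$, $V(x)$, $V(S(x))$: when the neighbour has positive $V$ we extract a T1 witness; when $V(P(x))=0$ and $P(x)\neq x$ we extract a T2 witness (if $V(x)=1$) or a T3 witness via the second clause (if $V(x)\ge 2$); and when $V(S(x))=0$ with $S(x)\neq x$ we extract a T3 witness via the first clause.

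For \EOPL $\le$ \EOML, I would first preprocess $(S,P,V)$ into $(S,P,\tilde V)$ by setting $\tilde V(x)=V(x)+1$ for $x\neq 0^n$ and $\tilde V(0^n)=0$, so that only $0^n$ has zero potential; any R2 witness newly introduced by this shift occurs at some $x$ with $S(x)=0^n\neq x$, and such an $x$ is already an R1 witness in the original instance since $P(0^n)=0^n\neq x$. I would then \emph{unroll}: the \EOML state space is $\{0,1\}^n\times\{0,\dotsc,2^m-1\}$, identified with $\{0,1\}^{n+m}$ by placing $(0^n,0)$ at $0^{n+m}$, and each \EOPL edge $x\to S(x)$ is expanded into $L(x):=\max(1,\tilde V(S(x))-\tilde V(x))$ \EOML edges along $(x,0),(x,1),\dotsc,(x,L(x)-1)\to(S(x),0)$, with $V''((x,i))=\tilde V(x)+i+1$ on the ``valid'' pairs ($i<L(x)$ and $x$ non-degenerate), and $V''=0$ with self-loops on the invalid pairs. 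By construction $V''$ increments by exactly $1$ at every step except at a transition $(x,L(x)-1)\to(S(x),0)$ with $\tilde V(S(x))\le\tilde V(x)$, which are precisely R2---or, in the boundary case $S(x)=0^n$, R1---witnesses of the preprocessed \EOPL; and since $\tilde V(x)\ge 1$ for $x\neq 0^n$, the only pair with $V''=1$ is $(0^n,0)$, so no spurious T2 solutions arise. T1 witnesses in \EOML trace back to R1 witnesses in \EOPL through the same unrolling.

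The main obstacle will be handling ``degenerate'' \EOPL vertices $x\neq 0^n$ with $S(x)=P(x)=x$: such an $x$ is neither an R1 nor an R2 witness by itself, yet the naive unrolling would leave $(x,0)$ as a spurious T3 witness with no \EOPL counterpart. I would sidestep this by declaring every pair $(x,i)$ with such a degenerate $x$ invalid. The would-be T3 violation is then replaced by a T1 witness at the preceding \EOML vertex $(x_{\text{prev}},L(x_{\text{prev}})-1)$, whose $S''$-image is the invalid self-loop at $(x,0)$; in turn, $x_{\text{prev}}$ is an R1 witness of the preprocessed \EOPL because $P(S(x_{\text{prev}}))=P(x)=x\neq x_{\text{prev}}$. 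Combined with the forward solution-mappings for both directions, this establishes the polynomial-time equivalence of \EOML and \EOPL.
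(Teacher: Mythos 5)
Your proposal is correct and follows essentially the same route as the paper: the easy direction neutralises zero-potential vertices as self-loops, and the hard direction appends potential bits to the vertex name and unrolls each edge into unit-increment steps, with irrelevant pairs made zero-potential self-loops and degenerate self-loop vertices invalidated so they cannot become spurious T3 witnesses. Your bookkeeping differs only mildly from the paper's (a downward potential shift in place of the paper's extra bit in the \EOML-to-\EOPL direction, and offset-indexing plus a global $+1$ shift that conveniently rules out T2 solutions altogether, where the paper instead maps T2 solutions back via a short case analysis), but the substance of the argument is the same.
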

As expected, the reduction from~\EOML to \EOPL is relatively easy. It requires
handling the difference in potential at $0^n$ and vertices with potential zero that
are not discarded directly as possible solutions in \EOPL. We make the latter
self loops, but that creates extra starts and ends of lines which need to be
handled. Full details of the reduction with proofs are in
Appendix~\ref{sec:EOMLtoEOPL}.

The reduction from \EOPL to \EOML is involved, and appears in detail in
Appendix~\ref{sec:eopl2eoml}. Here the basic idea is to insert missing single
increments in between by introducing new vertices along the original edges. To
allow this we need to encode potential itself in the vertex description. If
there is an edge from $\uu$ to $\uu'$ in the \EOPL instance whose respective
potentials are $p$ and $p'$ such that say $p<p'$ then we create edges $(u,p)\ra
(u,p+1)\ra \dots \ra (u,p'-1)\ra (u,p')$. However, this creates a lot of dummy
vertices, namely those that never appear on any edge due to irrelevant potential
values, i.e., in this example $(u,\pi)$ with $\pi <p$ or $\pi\ge p'$. We make
them self loops (not an end-of-line) with zero potential, and since
non-end-of-line solutions of \EOML, namely $T2$ and $T3$, must have strictly
positive potential, these will never create a solution of the \EOML instance.

In addition, a number of issues need to be handled with consistency: $(a)$
a $T2$ type solution of \EOML may be neither at the end of any line nor be a 
potential violation in \EOPL; we do extra (linear time) work to handle such
solutions, $(b)$ a $T3$ type potential violation may not be on a ``valid'' edge as
required by \EOPL. $(c)$ ``invalid'' edges, $(d)$ potential difference at the
initial vertex $0^n$, etc.




\section{Reduction from \PLCP to \EOPL}
\label{sec:PLCPtoEOPL}

In this section we present a polynomial-time reduction from the P-matrix Linear
Complementarity Problem (\PLCP) to \EOPL.
A Linear Complementarity Problem (LCP) is defined as follows. Now on by $[n]$ we mean set $\{1,\dots,n\}$.

\begin{definition}[LCP]
\label{def:lcp}
Given a matrix $M \in R^{d \times d}$ and a vector $\qq\in \Real^{d\times 1}$,
find a vector~{$\yy\in \Real^{d \times 1}$} such that:
\begin{equation}\label{eq:lcp}
M\yy\le \qq;\ \ \ \ \yy\ge 0;\ \ \ \ y_i(\qq - M\yy)_i =0,\ \forall i \in [n].
\end{equation}
\end{definition}
In general, an LCP may have no solution, and deciding whether one does is
\NP-complete~\cite{chung1989np}. If the matrix $M$ is a P-matrix, as defined
next, then the LCP $(M,\qq)$ has a unique solution for all $\qq\in \Real^{d\times
1}$.
\begin{definition}[P-matrix]
\label{def:Pmatrix}
A matrix $M \in \Real^{d \times d}$ is called a P-matrix if every principle
minor of $M$ is positive, {\em i.e.,} for every subset $S\subseteq[d]$, the
sub-matrix $N=[M_{i,j}]_{i\in S, j\in S}$ has strictly positive determinant. 
\end{definition}
In order to define a problem that takes all matrices $M$ as input without 
a promise, Megiddo~\cite{megiddo1988note} defined \PLCP as the following problem
(see also~\cite{megiddo1991total}).
\begin{definition}[\PLCP] \label{def:plcp} Given a matrix $M\in \Real^{d\times
d}$ and a vector $\qq\in \Real^{d\times 1}$, either:
\begin{enumerate}[label=(Q\arabic*)] \item Find vector $\yy \in \Real^{n
			\times 1}$ that satisfies (\ref{eq:lcp}) \item Produce a witness
that $M$ is a not a P-matrix, {\em i.e.,} find $S\subset [d]$ such that for
submatrix $N=[M_{i,j}]_{i\in S, j\in S}$, $det(N)\le 0$.  \end{enumerate}
\end{definition}
Later, Papadimitriou showed that \PLCP is in
\PPAD~\cite{papadimitriou1994complexity}, and then Daskalakis and Papadimitrou
showed that it is in \CLS~\cite{daskalakis2011continuous} (based on the
potential reduction method in~\cite{kojima1992interior}).  Designing a
polynomial-time solution for the \PLCP problem has been open for decades, at
least since the 1978 paper of Murty~\cite{murty1978computational} that provided
exponential-time examples for \emph{complementary pivoting algorithms}, such as 
\emph{Lemke's algorithm}~\cite{lemke1965bimatrix}, for
P-matrix Linear Complementarity Problems. Murty's family of P-matrices were
based on the Klee-Minty's cubes that had been used to give exponential-time
examples for the simplex method, and which inspired the research that led to
polynomial-time algorithms for Linear Programming. No similar polynomial-time
algorithms are known for \PLCP though.

Lemke's algorithm introduces an extra variable, say $z$, to the LCP polytope,
and follows a path on the $1$-skeleton of the new polytope (like the simplex 
method for linear programming) based
on complementary pivot rule (details below).  A general LCP need not have a
solution, and thus Lemke's algorithm is not guaranteed to terminate with a
solution.  However, for P-matrix LCPs, Lemke's algorithm terminates.  Indeed, if
Lemke's algorithm does not terminate with a solution, it provides a witness that
the matrix $M$ is not a P-matrix.  The structure of the path traced by Lemke's
algorithm is crucial for our reduction, so let us first briefly describe the
algorithm.

\subsection{Lemke's Algorithm}
\label{sec:lemke}

The explanation of Lemke's algorithm in this section is taken from \cite{GMSV}.
The problem is interesting only when $\qq \not \geq 0$, since otherwise $\yy = 0$ is a trivial solution. Let us introduce
slack variables $\ps$ to obtain the following equivalent formulation:
\begin{equation} \label{eq:b} \MM \yy  + \ps = \pq, \ \ \ \  \yy \geq 0, \ \ \ \ \ps \geq 0 \ \ \ \ \mbox{and} \ \ \ \ y_is_i = 0,\ \forall i\in[d].  \end{equation}

Let $\CQ$ be the polyhedron in $2d$ dimensional space defined by the first three conditions; we will assume that $\CQ$ is
non-degenerate (just for simplicity of exposition; this will not matter for our reduction).  
Under this condition, any solution to (\ref{eq:b}) will be a vertex of $\CQ$, since it must satisfy $2d$
equalities. Note that the set of solutions may be disconnected.
An ingenious idea of Lemke was to introduce a new variable and consider the system:
\begin{equation} \label{eq:c} \MM \yy  + \ps -z \one  = \pq, \ \ \ \  \yy \geq 0, \ \ \ \ \ps \geq 0, \ \ \ \  z \geq 0  \ \
\ \ \mbox{and} \ \ \ \ y_is_i = 0,\ \forall i\in[d].  \end{equation}
The next lemma follows by construction of (\ref{eq:c}).
\begin{lemma}\label{lem:lemke1}
Given $(\MM,\qq)$, $(\yy,\ps,z)$ satisfies \eqref{eq:c} with $z=0$ iff $\yy$ satisfies~\eqref{eq:lcp}.
\end{lemma}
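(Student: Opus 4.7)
The plan is to prove both directions by direct substitution, using that the slack vector $\ps$ exactly encodes the quantity $\qq - \MM\yy$ whenever $z=0$.

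For the forward direction, I would start from a triple $(\yy, \ps, z)$ satisfying \eqref{eq:c} with $z = 0$. Substituting $z = 0$ into the equality $\MM\yy + \ps - z\one = \qq$ yields $\ps = \qq - \MM\yy$. The constraint $\ps \geq 0$ then becomes $\qq - \MM\yy \geq 0$, i.e., $\MM\yy \leq \qq$; the constraint $\yy \geq 0$ is preserved verbatim; and the complementarity constraint $y_i s_i = 0$ for every $i \in [d]$ translates into $y_i(\qq - \MM\yy)_i = 0$. These are precisely the three conditions of \eqref{eq:lcp}, so $\yy$ solves the LCP.

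For the reverse direction, suppose $\yy$ satisfies \eqref{eq:lcp}. Define $\ps \triangleq \qq - \MM\yy$ and $z \triangleq 0$. Then $\MM\yy + \ps - z\one = \MM\yy + (\qq - \MM\yy) - 0 = \qq$, so the linear equation in \eqref{eq:c} holds. Nonnegativity of $\yy$ comes from the LCP; nonnegativity of $\ps$ comes from $\MM\yy \leq \qq$; and $z = 0 \geq 0$ trivially. The complementarity $y_i s_i = y_i(\qq - \MM\yy)_i = 0$ for all $i \in [d]$ is exactly the third LCP condition. Hence the constructed triple satisfies \eqref{eq:c} with $z = 0$.

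There is no real obstacle here: the lemma is essentially a definitional restatement, recording that the slack variable $\ps$ and the auxiliary variable $z$ were introduced in precisely the way that makes the equivalence tautological. The only care needed is to be explicit about the bijection between solutions of \eqref{eq:lcp} and the $z = 0$ slice of \eqref{eq:c}, since this bijection (rather than mere logical equivalence) is what the subsequent analysis of Lemke's algorithm will rely on.
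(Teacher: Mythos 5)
Your proof is correct and matches the paper's treatment: the paper simply remarks that the lemma ``follows by construction of \eqref{eq:c}'', and your argument just spells out that routine substitution (set $z=0$, identify $\ps$ with $\qq-\MM\yy$) in both directions. Nothing further is needed.
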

Let $\CPol$ be the polyhedron in $2d + 1$ dimensional space defined by the first four conditions of \eqref{eq:c}, i.e.,
\begin{equation}\label{eq:cp}
\CPol = \{ (\yy,\ps, z) \ |\ \MM \yy  + \ps -z \one  = \pq, \ \ \ \yy \geq 0, \ \ \ \ps \geq 0, \ \ \  z \geq 0\};
\end{equation}
we will assume that $\CPol$ is {\em non-degenerate}.  

Since any solution to (\ref{eq:c}) must still satisfy $2d$ equalities in $\CPol$, the set of solutions, say
$S$, will be a subset of the one-skeleton of $\CPol$, i.e., it will consist of edges and vertices of $\CPol$.  Any solution to
the original system (\ref{eq:b}) must satisfy the additional condition $z = 0$ and hence will be a vertex of $\CPol$.

Now $S$ turns out to have some nice properties. Any point of $S$ is {\em fully labeled} in the sense that for each $i$, $y_i
= 0$ or $s_i = 0$.  We will say that a point of $S$ {\em has duplicate label i} if $y_i = 0$ and $s_i = 0$ are both satisfied
at this point. Clearly, such a point will be a vertex of $\CPol$ and it will have only one duplicate label.  Since there are
exactly two ways of relaxing this duplicate label, this vertex must have exactly two edges of $S$ incident at it.  Clearly, a
solution to the original system (i.e., satisfying $z = 0$) will be a vertex of $\CPol$ that does not have a duplicate label.  On
relaxing $z=0$, we get the unique edge of $S$ incident at this vertex.

As a result of these observations, we can conclude that $S$ consists of paths and cycles.  Of these paths, Lemke's algorithm
explores a special one.  An unbounded edge of $S$ such that the vertex of $\CPol$ it is incident on has $z > 0$ is called a
{\em ray}.  Among the rays, one is special -- the one on which $\yy = 0$. This is called the {\em primary ray} and the rest
are called {\em secondary rays}. Now Lemke's algorithm explores, via pivoting, the path starting with the primary ray. This
path must end either in a vertex satisfying $z = 0$, i.e., a solution to the original system, or a secondary ray. In the
latter case, the algorithm is unsuccessful in finding a solution to the original system; in particular, the original system
may not have a solution.  
We give the full pseudo-code for Lemke's algorithm in Appendix~\ref{app:lemke}.



\subsection{Polynomial time reduction from \PLCP to \EOPL}

It is well known that if matrix $M$ is a P-matrix (\PLCP), then $z$ strictly
decreases on the path traced by Lemke's algorithm \cite{cottle2009linear}.
Furthermore, by a result of Todd~\cite[Section 5]{todd1976orientation}, paths traced by
complementary pivot rule can be locally oriented.  Based on these two facts, 
we now derive a polynomial-time reduction from \PLCP to \EOPL.

Let $\CI=(M,\qq)$ be a given \PLCP instance, and let $\CL$ be the length of the 
bit representation of $M$ and $\qq$. 
We will reduce $\CI$ to an \EOPL instance $\CE$ in time $poly(\CL)$. 
According to Definition~\ref{def:EOPL}, the instance $\CE$ is defined 
by its vertex set $\vert$, and procedures $S$ (successor), $P$ (predecessor) and $\pot$ (potential). 
Next we define each of these. 

As discussed in Section \ref{sec:lemke} the linear constraints of (\ref{eq:c})
on which Lemke's algorithm operates forms a polyhedron $\CPol$ given in
(\ref{eq:cp}). We assume that $\CPol$ is non-degenerate. This is without
loss of generality since, a typical way to ensure this is by perturbing $\qq$ so
that configurations of solution vertices remain unchanged
\cite{cottle2009linear}, and since $M$ is unchanged the LCP is still \PLCP. 

Lemke's algorithm traces a path on feasible points of (\ref{eq:c}) which is on
$1$-skeleton of $\CPol$ starting at $(\yy^0,\ps^0,z^0)$, where:
\begin{equation}\label{eq:v0}
\yy^0=0,\ \ \ \ \ z^0= |\min_{i \in [d]} q_i|,\ \ \ \ \  \ps^0=\qq+z\ones
\end{equation}
We want to capture
vertex solutions of (\ref{eq:c}) as vertices in \EOPL instance $\CE$. To
differentiate we will sometimes call the latter {\em configurations}. Vertex
solutions of (\ref{eq:c}) are exactly the vertices of polyhedron $\CPol$ with
either $y_i=0$ or $s_i=0$ for each $i\in [d]$. Vertices of (\ref{eq:c}) with
$z=0$ are our final solutions (Lemma \ref{lem:lemke1}). While each of its {\em
non-solution} vertex has a duplicate label. Thus, a vertex of this path can be
uniquely identified by which of $y_i=0$ and $s_i=0$ hold for each $i$ and its
duplicate label. This gives us a representation for vertices in the \EOPL
instance $\CE$. 

\medskip

\noindent{\bf \EOPL Instance $\CE$.}
\begin{itemize}
\item Vertex set $\vert=\{0,1\}^n$ where $n = 2d$. 
\item Procedures $S$ and $P$ as defined in Tables \ref{tab:S} and \ref{tab:P} respectively
\item Potential function $\pot:\vert \rightarrow \{0,1,\dots, 2^m-1\}$ defined in Table \ref{tab:F} for $m=\lceil ln(2\Delta^3)\rceil$, 
	  where $$\Delta=(n! \cdot I_{max}^{2d+1})+1$$ 
	  and $I_{max} = \max\{\max_{i,j\in [d]} M(i,j),\ \max_{i\in [d]} |q_i|\}$. 
\end{itemize}

For any vertex $\uu\in \vert$, the first $d$ bits of $\uu$ represent
which of the two inequalities, namely $y_i\ge 0$ and $s_i\ge 0$, are tight for
each $i \in [d]$. A valid setting of the second set of $d$ bits will have 
at most one non-zero bit -- if none is one then $z=0$, otherwise the location of one bit indicates the duplicate label. 
Thus, there are many invalid configurations, namely
those with more than one non-zero bit in the second set of $d$ bits. 
These are dummies that we will handle separately, and we define a procedure 
$\isvalid$ to identify non-dummy vertices in Table \ref{tab:iv} (in Appendix \ref{app:proc}). 
To go between ``valid'' vertices of $\CE$ and corresponding vertices of the Lemke polytope
$\CPol$ of LCP $\CI$, we define procedures $\eti$ and $\ite$ in Table
\ref{tab:ei} (in Appendix \ref{app:proc}).  
By construction of $\isvalid$, $\eti$ and $\ite$, the next lemma follows.
All the missing proofs of this section may be found in Appendix \ref{app:PLCPtoEOPL}.

\begin{lemma}\label{lem:vert}
If $\isvalid(\uu)=1$ then $\uu=\ite(\eti(\uu))$, and the corresponding vertex $(\yy,\ps,z)\in \eti(\uu)$ of $\CPol$ is feasible in (\ref{eq:c}). If $(\yy,\ps,z)$ is a feasible vertex of (\ref{eq:c}) then $\uu=\ite(\yy,\ps,z)$ is a valid configuration, {\em i.e.,} $\isvalid(\uu)=1$.
\end{lemma}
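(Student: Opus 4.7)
The plan is to unpack the definitions of $\isvalid$, $\eti$, and $\ite$ in the appendix tables and then verify both directions via a dimension-counting/bijection argument. The structural facts I will rely on are: (a) $\CPol$ is a non-degenerate polyhedron in $2d+1$ variables, defined by $d$ linear equalities together with $2d+1$ non-negativity inequalities, so every vertex of $\CPol$ satisfies the $d$ equalities plus exactly $d+1$ tight inequalities drawn from $\{y_i\ge 0, s_i\ge 0 : i\in [d]\}\cup\{z\ge 0\}$; and (b) every feasible vertex of (\ref{eq:c}) additionally satisfies the complementarity condition $y_is_i=0$ for each $i\in[d]$.

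For the forward direction, I would argue that if $\isvalid(\uu)=1$, then the first $d$ bits of $\uu$ select, for each $i$, one of the two complementary tight constraints $y_i=0$ or $s_i=0$, while the second $d$ bits encode either the extra tight constraint $z=0$ (if all zero) or a single duplicate label $j$ (if exactly bit $j$ is set). Combined with the $d$ defining equalities $M\yy+\ps-z\ones=\qq$, this specifies a system of $2d+1$ independent linear constraints whose unique solution (by non-degeneracy) is precisely the vertex $(\yy,\ps,z)\in\eti(\uu)$; non-negativity then follows because this is a vertex of $\CPol$, and the complementarity $y_is_i=0$ is immediate from the tight constraints imposed coordinate-by-coordinate. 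Since $\ite$ is designed to read off the signature of tight constraints at a vertex, applying it to $(\yy,\ps,z)$ must recover $\uu$, giving $\uu=\ite(\eti(\uu))$.

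For the reverse direction, given a feasible vertex $(\yy,\ps,z)$ of (\ref{eq:c}), non-degeneracy forces exactly $d+1$ of the non-negativity inequalities to be tight; complementarity contributes at least $d$ of them (one per coordinate), so at most one additional tight constraint remains, which must be either $z=0$ or a duplicate label at some index $j$. Plugging this into $\ite$ then yields a $\uu$ whose second $d$ bits contain at most one nonzero entry, which is exactly the $\isvalid$ condition. The main care point — and what I expect to be the most delicate piece — will be bookkeeping in the duplicate-label case where both $y_j=0$ and $s_j=0$ hold simultaneously: I will need to verify that $\ite$ records a canonical choice in the first $d$ bits together with the duplicate flag in the second block, and that $\eti$ reconstructs the same vertex from this encoding, so that the two maps are genuine inverses on the restriction to valid configurations and feasible vertices. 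Once the table definitions are read off carefully, all of this reduces to straightforward verification using non-degeneracy.
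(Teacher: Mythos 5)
Your proposal is correct and takes essentially the same approach as the paper: both hinge on the non-degeneracy of $\CPol$ to guarantee that the linear system extracted from $\uu$ is non-singular (so $\eti$ and $\isvalid$ produce a unique vertex) and that a feasible vertex has at most one duplicate label (so $\ite$ produces a configuration with at most one nonzero bit in the second block). The paper's proof is considerably terser---it simply identifies the two possible failure modes (singular $A$, or $|DL|>1$) and observes that each would force more than $2d+1$ tight constraints at a point of $\CPol$, contradicting non-degeneracy---whereas you spell out the dimension count explicitly; but the substance is the same.
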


\begin{figure}[htbp]
   \centering
\vspace{-1cm}

  \includegraphics[width=\textwidth]{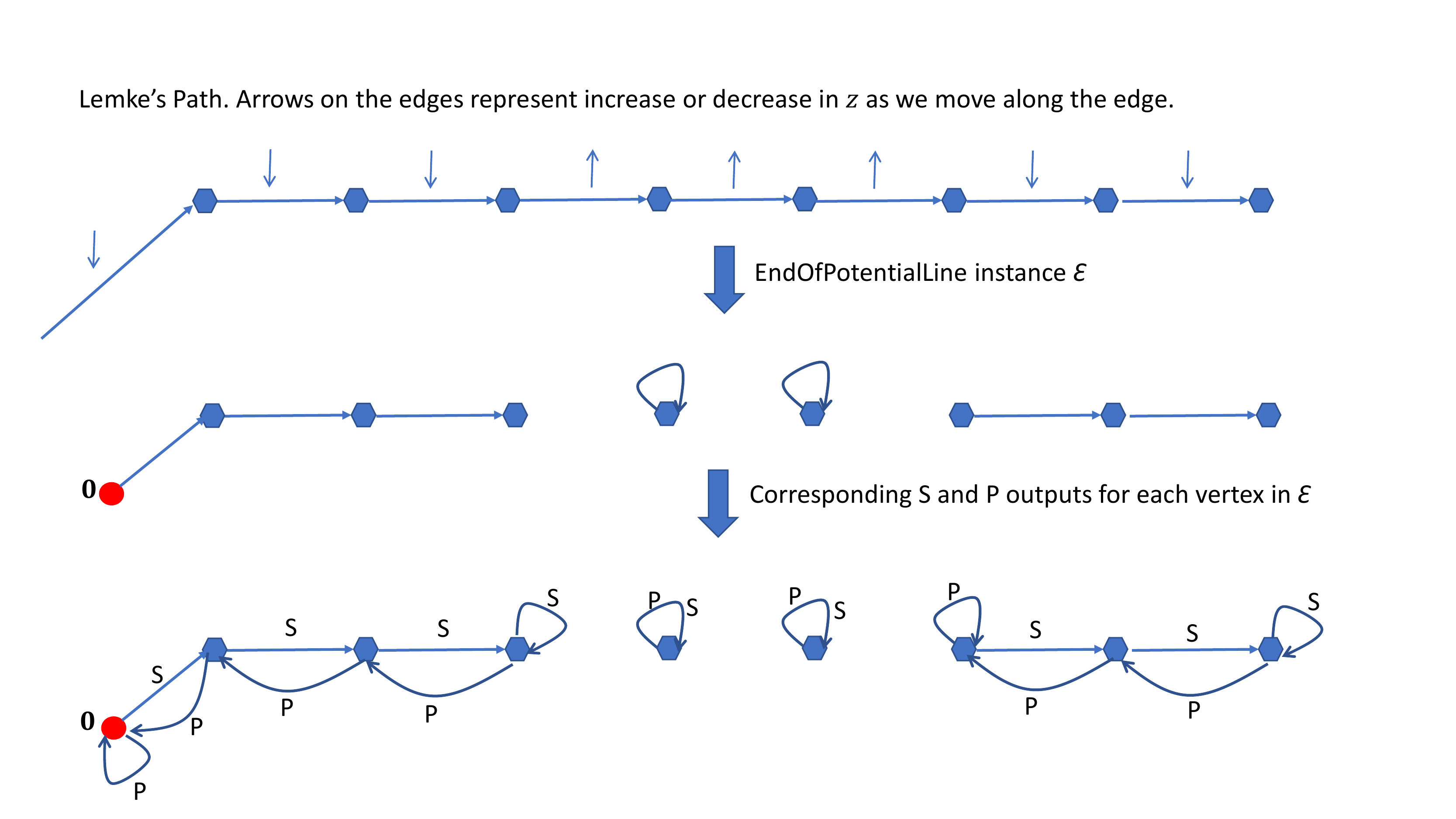}
 \caption{Construction of $S$ and $P$ for \EOPL instance $\CE$ from the Lemke
 	path. The first path is the Lemke path and the arrows on its edges indicate
 	whether the value of $z$ increases or decreases along the edge.  Note that
 	the end or start of a path in $\CE$, which is an intermediate vertex in
 	Lemke path that has either decreased and then increased, or increased and then
 	decreased in the value of $z$, is a violation of $M$ being a $P$ matrix
 	\cite{cottle2009linear}, i.e., $\PLt$ type solution of
 	\PLCP.}
 	\label{fig:path}
\end{figure}

The main idea behind procedures $S$ and $P$, given in Tables \ref{tab:S} and
\ref{tab:P} respectively, is the following (also see Figure \ref{fig:path}):
Make dummy configurations in $\vert$ to point to themselves with cycles of
length one, so that they can never be solutions. 
The starting vertex $0^n \in \vert$ points to the configuration that corresponds
to the first vertex of the Lemke path, namely $\uu^0=\ite(\yy^0,\ps^0,z^0)$. 
Precisely, $S(0^n)=\uu^0$, $P(\uu^0)=0^n$ and $P(0^n)=0^n$ (start of
a path). 

For the remaining cases, let $\uu\in \vert$ have corresponding representation
$\xx=(\yy,\ps,z)\in \CPol$, and suppose $\xx$ has a duplicate label. As one
traverses a Lemke path for a P-LCPs, the value of $z$ monotonically decreases.
So, for $S(\uu)$ we compute the adjacent vertex $\xx'=(\yy',\ps',z')$ of $\xx$
on Lemke path such that the edge goes from $\xx$ to $\xx'$, and if the $z'<z$,
as expected, then we point $S(\uu)$ to configuration of $\xx'$ namely
$\ite(\xx')$. Otherwise, we let $S(\uu)=\uu$. Similarly, for $P(\uu)$, we find
$\xx'$ such that edge is from $\xx'$ to $\xx$, and then we let $P(\uu)$ be
$\ite(\xx')$ if $z'>z$ as expected, otherwise $P(\uu)=\uu$. 

For the case when $\xx$ does not have a duplicate label, then we have $z=0$. This is
handled separately since such a vertex has exactly one incident edge on the Lemke
path, namely the one obtained by relaxing $z=0$. According to the direction of 
this edge, we do similar process as before. For example, if the edge goes from 
$\xx$ to $\xx'$, then, if $z'<z$, we set $S(\uu)=\ite(\xx')$ else $S(\uu)=\uu$,
and we always set $P(\uu)=\uu$.  In case the edge goes from $\xx'$ to $\xx$, we
always set $S(\uu)=\uu$, and we set $P(\uu)$ depending on whether or not $z'>z$.

%
\medskip

The potential function $\pot$, formally defined in Table \ref{tab:F},
gives a value of zero to dummy vertices and the starting vertex $0^n$. To all
other vertices, essentially it is $((z^0-z) * \Delta^2)+1$. Since value of $z$
starts at $z^0$ and keeps decreasing on the Lemke path this value will keep
increasing starting from zero at the starting vertex $0^n$. Multiplication by
$\Delta^2$ will ensure that if $z_1>z_2$ then the corresponding potential values 
will differ by at least one. This is because, since $z_1$ and $z_2$ are 
coordinates of two vertices of polytope $\CPol$, their maximum value is $\Delta$
and their denominator is also bounded above by $\Delta$. Hence $z_1-z_2\le
1/\Delta^2$ (Lemma \ref{lem:pot}).  

To show correctness of the reduction we need to show two things: $(i)$ All the procedures are well-defined and polynomial time. $(ii)$ We can construct a solution of $\CI$ from a solution of $\CE$ in polynomial time.

\begin{table}
\begin{minipage}{0.73\textwidth}
\caption{Successor Procedure $S(\uu)$}\label{tab:S}
\begin{tabular}{|l|}
\hline
\hspace{0pt}{\bf If} $\isvalid(\uu)=0$ {\bf then} {\bf Return} $\uu$\\
\hspace{0pt}{\bf If} $\uu=0^n$ {\bf then} {\bf Return} $\ite(\yy^0,\ps^0,z^0)$\\
\hspace{0pt}$\xx=(\yy,\ps,z) \leftarrow \eti(\uu)$\\
\hspace{0pt}{\bf If} $z=0$ {\bf then} \\
\hspace{5pt} $\xx^1\leftarrow$ vertex obtained by relaxing $z=0$ at $\xx$ in $\CPol$. \\
\hspace{5pt} {\bf If} Todd \cite{todd1976orientation} prescribes edge from $\xx$ to $\xx^1$ \\
\hspace{10pt} {\bf then} set $\xx'\leftarrow \xx^1$. {\bf Else Return} $\uu$ \\
\hspace{0pt}{\bf Else} set $l\leftarrow $ duplicate label at $\xx$\\
\hspace{5pt} $\xx^1\leftarrow $ vertex obtained by relaxing $y_l=0$ at $\xx$ in $\CPol$ \\
\hspace{5pt} $\xx^2\leftarrow $ vertex obtained by relaxing $s_l=0$ at $\xx$ in $\CPol$ \\
\hspace{5pt} {\bf If} Todd \cite{todd1976orientation} prescribes edge from $\xx$ to $\xx^1$ \\
\hspace{10pt} {\bf then} $\xx'=\xx^1$ \\
\hspace{5pt} {\bf Else} $\xx'=\xx^2$\\
\hspace{0pt}Let $\xx'$ be $(\yy',\ps',z')$. \\
\hspace{0pt}{\bf If} $z>z'$ {\bf then} {\bf Return} $\ite(\xx')$. {\bf Else} {\bf Return} $\uu$.\\
\hline
\end{tabular}
\end{minipage}%
\hspace{-1cm}
\begin{minipage}{0.23\textwidth}
\caption{Potential Value $\pot(\uu)$}\label{tab:F}
\begin{tabular}{|l|}
\hline
\hspace{0pt} {\bf If} $\isvalid(\uu)=0$ \\
\hspace{5pt} {\bf then} {\bf Return} $0$\\
\hspace{0pt} {\bf If} $\uu=0^n$\\
\hspace{5pt}  {\bf then} {\bf Return} $0$\\
\hspace{0pt} $(\yy,\ps,z) \leftarrow \eti(\uu)$\\
\hspace{0pt} {\bf Return} $\lfloor \Delta^2*(\Delta -z)\rfloor$\\
\hline
\end{tabular}
\end{minipage}
\end{table}

\begin{table}[!htb]
\caption{Predecessor Procedure $P(\uu)$}\label{tab:P}
\begin{tabular}{|l|}
\hline
\hspace{0pt} {\bf If} $\isvalid(\uu)=0$ {\bf then} {\bf Return} $\uu$\\
\hspace{0pt} {\bf If} $\uu=0^n$ {\bf then} {\bf Return} $\uu$\\
\hspace{0pt} $(\yy,\ps,z) \leftarrow \eti(\uu)$\\
\hspace{0pt} {\bf If} $(\yy,\ps,z)=(\yy^0,\ps^0,z^0)$ {\bf then} {\bf Return} $0^n$\\
\hspace{0pt} {\bf If} $z=0$ {\bf then} \\
\hspace{5pt} $\xx^1\leftarrow$ vertex obtained by relaxing $z=0$ at $\xx$ in $\CPol$. \\
\hspace{5pt} {\bf If} Todd \cite{todd1976orientation} prescribes edge from $\xx^1$ to $\xx$ {\bf then} set $\xx'\leftarrow \xx^1$. {\bf Else Return} $\uu$\\
\hspace{0pt} {\bf Else}\\
\hspace{5pt} $l\leftarrow $ duplicate label at $\xx$\\
\hspace{5pt} $\xx^1\leftarrow $ vertex obtained by relaxing $y_l=0$ at $\xx$ in $\CPol$ \\
\hspace{5pt} $\xx^2\leftarrow $ vertex obtained by relaxing $s_l=0$ at $\xx$ in $\CPol$ \\
\hspace{5pt} {\bf If} Todd \cite{todd1976orientation} prescribes edge from $\xx^1$ to $\xx$ {\bf then} $\xx'=\xx^1$ {\bf Else} $\xx'=\xx^2$\\
\hspace{0pt} Let $\xx'$ be $(\yy',\ps',z')$. {\bf If} $z<z'$ {\bf then} {\bf Return} $\ite(\xx')$. {\bf Else} {\bf Return} $\uu$.\\
\hline
\end{tabular}
\end{table}

\begin{lemma}\label{lem:PSF}
Functions $P$, $S$ and $\pot$ of instance $\CE$ are well defined, making $\CE$ a valid \EOPL instance. 
\end{lemma}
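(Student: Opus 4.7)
The plan is to verify each requirement of Definition~\ref{def:EOPL} in turn, relying on the construction in Tables~\ref{tab:S},~\ref{tab:P}, and~\ref{tab:F} together with Lemma~\ref{lem:vert} to translate between valid configurations in $\vert$ and feasible vertices of $\CPol$. The auxiliary procedures $\isvalid$, $\eti$, and $\ite$ in the appendix are straightforward polynomial-time routines on bit strings of length $n=2d$, so the bulk of the work is a case analysis on the pseudocode.

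First I would argue that $S$ and $P$ are total, polynomial-time functions $\{0,1\}^n\to\{0,1\}^n$. For dummy $\uu$ or $\uu=0^n$, both tables immediately return an element of $\{0,1\}^n$. For a valid $\uu\ne 0^n$, the procedures compute $\xx=\eti(\uu)$ and then obtain one or two adjacent vertices $\xx^1,\xx^2$ of $\CPol$ by a single complementary pivot step. Since $\CPol$ is non-degenerate, each such adjacent vertex is uniquely defined and has bit length $\poly(\CL)$, and the Todd orientation~\cite{todd1976orientation} on an edge is determined by the sign of one determinant computable in $\poly(\CL)$ time. Hence every branch returns one of $\uu$, $0^n$, or $\ite(\xx')$, and in the last case Lemma~\ref{lem:vert} guarantees the output is a valid element of $\vert=\{0,1\}^n$.

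Next I would check the boundary conditions $P(0^n)=0^n\neq S(0^n)$ and $\pot(0^n)=0$. The equalities $P(0^n)=0^n$ and $\pot(0^n)=0$ are immediate from the second lines of Tables~\ref{tab:P} and~\ref{tab:F}. For $S(0^n)=\ite(\yy^0,\ps^0,z^0)$, we may assume $\qq\not\ge 0$ (otherwise $\yy=0$ is a trivial LCP solution, and there is nothing to reduce), so from~\eqref{eq:v0} we have $z^0=|\min_i q_i|>0$; therefore the corresponding configuration has a non-zero bit in the second block of $d$ bits indicating $z$'s index as the duplicate label, which suffices to make it distinct from $0^n$.

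Finally I would verify that $\pot$ takes values in $\{0,1,\ldots,2^m-1\}$. Dummies and $0^n$ map to $0$; for any other valid $\uu$ the value is $\lfloor \Delta^2(\Delta - z)\rfloor$. A standard Cramer's-rule bound on vertex coordinates of a polyhedron with input size $\CL$ gives $|z|\le \Delta$, so $0\le \Delta-z\le 2\Delta$ and hence $\pot(\uu)\le 2\Delta^3\le 2^m$ for the chosen $m=\lceil\log(2\Delta^3)\rceil$. The main obstacle is establishing this polyhedral bound cleanly, together with the companion gap estimate $|z_1-z_2|\ge 1/\Delta^2$ between distinct $z$-coordinates of vertices of $\CPol$ that will be needed downstream in Lemma~\ref{lem:pot}; both are standard but delicate estimates on ratios of subdeterminants of the extended Lemke tableau, and once they are in place the rest of the verification is routine inspection of the three tables.
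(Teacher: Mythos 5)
Your proof is correct and follows essentially the same strategy as the paper's: the three procedures are polynomial-time in $\CL$ (hence realizable as $\poly(\CL)$-sized circuits), $S$ and $P$ map $\vert$ into $\vert$, and the potential is bounded by the polyhedral estimate $z \le \Delta$ giving $\pot(\uu)\le 2\Delta^3 < 2^m$. You are actually more thorough than the paper's own (very terse) proof, since you explicitly verify the boundary requirements $P(0^n)=0^n\neq S(0^n)$ and $\pot(0^n)=0$, which the paper leaves implicit.
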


There are two possible types of solutions of an \EOPL instance. One indicates
the beginning or end of a line, and the other is a vertex with locally optimal
potential (that does not point to itself). 
First we show that the latter case never arise. For this, we need the
next lemma, which shows that potential differences in two adjacent
configurations adheres to differences in the value of $z$ at corresponding
vertices.

\begin{lemma}\label{lem:pot}
Let $\uu \neq \uu'$ be two valid configurations, i.e.,
	$\isvalid(\uu)=\isvalid(\uu')=1$, and let $(\yy,\ps,z)$ and $(\yy',\ps',z')$
	be the corresponding vertices in $\CPol$. Then the following holds: $(i)$
	$\pot(\uu)=\pot(\uu')$ iff $z=z'$. $(ii)$ $\pot(\uu)>\pot(\uu')$ iff $z<z'$.
\end{lemma}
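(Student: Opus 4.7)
The proof hinges on a precision/separation bound for the $z$-coordinate across vertices of $\CPol$: distinct values of $z$ at two vertices cannot be arbitrarily close, and the multiplier $\Delta^2$ in the definition of $\pot$ is large enough to ``resolve'' any such gap even after flooring. My plan is to first establish this separation using Cramer's rule, and then verify both directions of (i) and (ii) by elementary manipulation.

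First I would pin down the precision of vertex coordinates of $\CPol$. A vertex of $\CPol$ is determined by selecting $2d+1$ linearly independent tight constraints among the inequalities defining $\CPol$ in $(\ref{eq:cp})$; Cramer's rule then expresses each coordinate, in particular $z$, as a ratio of two determinants of $(2d+1)\times (2d+1)$ submatrices whose entries come from $M$, $\qq$, the identity, or $-\one$. The Leibniz expansion of such a determinant has at most $(2d+1)!$ terms, each of absolute value at most $I_{\max}^{2d+1}$, so both the numerator and denominator of $z$ are integers of absolute value strictly less than $\Delta=n!\cdot I_{\max}^{2d+1}+1$. Consequently, $z=a/b$ for integers $a,b$ with $|a|,|b|<\Delta$ and $0\le z\le \Delta$. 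For two such values $z=a/b$ and $z'=a'/b'$ with $z\ne z'$, we obtain
\[
|z-z'| \;=\; \frac{|ab'-a'b|}{bb'} \;\ge\; \frac{1}{\Delta^2}.
\]

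Next, I would use this gap to prove (ii). By the definition in Table~\ref{tab:F}, since $\uu,\uu'$ are valid and non-zero, $\pot(\uu)=\lfloor \Delta^2(\Delta-z)\rfloor$ and $\pot(\uu')=\lfloor \Delta^2(\Delta-z')\rfloor$. If $z<z'$ then
\[
\Delta^2(\Delta-z)-\Delta^2(\Delta-z') \;=\; \Delta^2(z'-z) \;\ge\; 1,
\]
and whenever two reals differ by at least $1$, their floors differ by at least $1$ (since $\lfloor a\rfloor - \lfloor b\rfloor > (a-1)-b\ge 0$ forces $\lfloor a\rfloor\ge \lfloor b\rfloor+1$ by integrality). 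Hence $\pot(\uu)\ge \pot(\uu')+1>\pot(\uu')$. Conversely, if $z\ge z'$ then $\Delta-z\le \Delta-z'$, so monotonicity of the floor gives $\pot(\uu)\le \pot(\uu')$. Contrapositively, $\pot(\uu)>\pot(\uu')$ implies $z<z'$, completing (ii).

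Finally, (i) follows immediately: the forward direction is obvious from the formula, and if $\pot(\uu)=\pot(\uu')$ then neither $z<z'$ nor $z>z'$ can hold (both would contradict (ii) applied to the appropriate ordering of $\uu,\uu'$), so $z=z'$. The main technical obstacle is the Cramer's-rule bound on denominators; once one is careful that every coefficient appearing in the tight-constraint system is bounded by $I_{\max}$ (so that the $n!\cdot I_{\max}^{2d+1}$ factor in $\Delta$ dominates any submatrix determinant), the rest of the argument is a short computation with the floor function.
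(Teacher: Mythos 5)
Your proof is correct and takes essentially the same route as the paper: the paper likewise bounds the numerator and denominator of the $z$-coordinate of any vertex of $\CPol$ by $\Delta$ (it merely asserts this, whereas you make the Cramer's-rule argument explicit), deduces $|z-z'|\ge 1/\Delta^2$ for distinct values, and then uses the $\Delta^2$ multiplier together with the floor to obtain both directions of (i) and (ii). One small quibble: your crude Leibniz count gives $(2d+1)!\,I_{max}^{2d+1}$, which actually exceeds $\Delta-1=(2d)!\,I_{max}^{2d+1}$; this is harmless, since $d+1$ of the tight-constraint rows are unit vectors, so each determinant is (up to sign) a $d\times d$ minor involving entries of $M$, $I$, $-\one$ and $\qq$, and is therefore comfortably below $\Delta$.
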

%
%

Using the above lemma, we will next show that instance $\CE$ has no local maximizer. 

\begin{lemma}\label{lem:t}
Let $\uu,\vv \in \vert$ s.t. $\uu\neq \vv$, $\vv=S(\uu)$, and $\uu=P(\vv)$. Then $\pot(\uu)< \pot(\vv)$.
\end{lemma}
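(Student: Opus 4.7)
The plan is to unpack Table~\ref{tab:S} and split into the cases under which $\vv = S(\uu)$ can differ from $\uu$. Since $S$ sends every invalid (dummy) vertex to itself, the hypothesis $\uu \neq \vv$ immediately forces either $\uu = 0^n$ or $\isvalid(\uu) = 1$. These two cases can then be dispatched separately, and in both of them the hypothesis $\uu = P(\vv)$ will only be used implicitly (through consistency of the edge orientations used by $S$ and $P$); the strict potential inequality itself follows from $\vv = S(\uu)\neq \uu$.

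In the case $\uu = 0^n$, Table~\ref{tab:S} sets $\vv = \ite(\yy^0,\ps^0,z^0)$. The definition of $\pot$ in Table~\ref{tab:F} gives $\pot(\uu) = 0$, while $\pot(\vv) = \lfloor \Delta^2(\Delta - z^0)\rfloor$. Since $z^0 = |\min_i q_i| \le I_{\max}$ and $\Delta = n! \cdot I_{\max}^{2d+1} + 1 > I_{\max}$, we have $\Delta - z^0 \ge 1$, and therefore $\pot(\vv) \ge \Delta^2 \ge 1 > \pot(\uu)$.

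In the case $\isvalid(\uu) = 1$ and $\uu \neq 0^n$, set $(\yy,\ps,z) = \eti(\uu)$, which by Lemma~\ref{lem:vert} is a feasible vertex $\xx$ of the Lemke polytope $\CPol$. Inspecting the remaining branches of Table~\ref{tab:S}, the only way for $S$ to return something other than $\uu$ itself is to reach the final line with an adjacent Lemke vertex $\xx' = (\yy',\ps',z')$ (chosen by Todd's local orientation rule) for which the check $z > z'$ succeeds; in that case $\vv = \ite(\xx')$. Lemma~\ref{lem:vert} again ensures $\vv$ is valid, and since $z > z'$, Lemma~\ref{lem:pot}(ii) yields $\pot(\uu) < \pot(\vv)$, as required.

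The only delicate point, and hence the main thing to be careful about in the write-up, is the $\uu = 0^n$ case: one needs to confirm that $\lfloor \Delta^2(\Delta - z^0)\rfloor$ is a strictly positive integer. This is handled by noting that $z^0 = |\min_i q_i|$ is bounded above by $I_{\max}$ while $\Delta$ is chosen, via the Cramer's-rule bound underlying Lemma~\ref{lem:pot}, to be a strict upper bound on every coordinate of every vertex of $\CPol$. Everything else is a mechanical case analysis on the pseudo-code for $S$, combined with the already-established Lemma~\ref{lem:pot}.
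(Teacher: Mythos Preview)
Your proposal is correct and follows essentially the same approach as the paper: read off $z' < z$ from the construction of $S$ in Table~\ref{tab:S} and then invoke Lemma~\ref{lem:pot}(ii). You are in fact more careful than the paper's own proof, which does not separately treat the case $\uu = 0^n$ (where $\uu$ does not correspond to an actual vertex of $\CPol$ and Lemma~\ref{lem:pot} does not directly apply); your explicit computation of $\pot(0^n)=0$ versus $\pot(\ite(\yy^0,\ps^0,z^0))>0$ fills that gap.
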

\begin{proof}
Let $\xx=(\yy,\ps,z)$ and $\xx'=(\yy',\ps',z')$ be the vertices in polyhedron $\CPol$ corresponding to $\uu$ and $\vv$ respectively. From the construction of $\vv=S(\uu)$ implies that $z'<z$. Therefore, using Lemma \ref{lem:pot} it follows that $\pot(\vv)<\pot(\uu)$.
\end{proof}

Due to Lemma \ref{lem:t} the only type of solutions available in $\CE$ is where $S(P(\uu))\neq \uu$ and $P(S(\uu))\neq \uu$. Next two lemmas shows how to construct solutions of $\CI$ from these. 

\begin{lemma}\label{lem:t1}
Let $\uu \in \vert$, $\uu \neq 0^n$. 
If $P(S(\uu))\neq \uu$ or $S(P(\uu))\neq \uu$, then $\isvalid(\uu)=1$, and for $(\yy,\ps,z)=\eti(\uu)$ if $z=0$ then $\yy$ is a $\PLo$ type solution of \PLCP instance $\CI=(M,\qq)$. 
\end{lemma}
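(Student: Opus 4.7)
The plan is to separate the two claims and dispatch them in order using the constructions already in place.

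For the first conclusion, I would argue by contraposition that no dummy vertex can be a solution. Inspecting the first line of both Table~\ref{tab:S} and Table~\ref{tab:P}, whenever $\isvalid(\uu)=0$ the procedures return $\uu$ itself, so $S(\uu)=\uu=P(\uu)$. Composing these yields $P(S(\uu))=\uu$ and $S(P(\uu))=\uu$, contradicting the hypothesis. Hence $\isvalid(\uu)=1$. This is the routine part: every dummy configuration is deliberately made into a self-loop of length one precisely to rule it out as a candidate solution.

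For the second conclusion, suppose $(\yy,\ps,z)=\eti(\uu)$ with $z=0$. Because we have just shown $\isvalid(\uu)=1$, Lemma~\ref{lem:vert} guarantees that $(\yy,\ps,z)$ is in fact a feasible vertex of the Lemke system~(\ref{eq:c}). Applying Lemma~\ref{lem:lemke1} to this vertex with $z=0$, we conclude that $\yy$ satisfies the original LCP~(\ref{eq:lcp}), which is exactly what it means to be a \PLo\ type solution of $\CI=(M,\qq)$.

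The main subtlety, and essentially the only thing to be careful about, is confirming that $\eti(\uu)$ is indeed a \emph{feasible} vertex of the Lemke polytope $\CPol$ satisfying complementarity (not merely some syntactic decoding), so that Lemma~\ref{lem:lemke1} applies. This is precisely what Lemma~\ref{lem:vert} was set up to provide, so the argument is short. Notice that the hypothesis $\uu\neq 0^n$ is needed only because in the case $\uu=0^n$ we artificially set $P(0^n)=0^n$ (start of line) while $S(0^n)=\ite(\yy^0,\ps^0,z^0)\neq 0^n$, which would otherwise trigger $S(P(0^n))\neq 0^n$ even though $0^n$ is not meant to certify a P-LCP solution; excluding this case makes the equivalence between EOPL-solutions and meaningful Lemke endpoints clean.
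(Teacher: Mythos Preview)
Your proposal is correct and follows essentially the same route as the paper: contraposition via the self-loop behaviour of dummy vertices (first line of Tables~\ref{tab:S} and~\ref{tab:P}) to get $\isvalid(\uu)=1$, then Lemma~\ref{lem:vert} for feasibility in~(\ref{eq:c}), then Lemma~\ref{lem:lemke1} to conclude the \PLo\ solution when $z=0$. Your closing remark about why $\uu\neq 0^n$ is needed is extra commentary the paper omits, but it is accurate and does not affect the argument.
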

\begin{proof}
By construction, if $\isvalid(\uu) = 0$, then $S(P(\uu))=\uu$ and $P(S(\uu))=\uu$, therefore $\isvalid(\uu)=0$ when $\uu$ has a predecessor or successor different from $\uu$.
Given this, from Lemma \ref{lem:vert} we know that $(\yy,\ps,z)$ is a feasible vertex in (\ref{eq:c}). Therefore, if $z=0$ then using Lemma \ref{lem:lemke1} we have a solution of the LCP (\ref{eq:lcp}), {\em i.e.,} a type $\PLo$ solution of our \PLCP instance $\CI=(\MM,\qq)$.
%
%
\end{proof}

\begin{lemma}\label{lem:t2}
Let $\uu \in \vert$, $\uu \neq 0^n$ such that $P(S(\uu))\neq \uu$ or $S(P(\uu))\neq \uu$, and let $\xx=(\yy,\ps,z)=\eti(\uu)$. 
If $z\neq 0$ then $\xx$ has a duplicate label, say $l$. And for directions $\sigma_1$ and $\sigma_2$ obtained by relaxing $y_l=0$ and $s_l=0$ respectively at $\xx$, we have $\sigma_1(z)*\sigma_2(z)\ge 0$, where $\sigma_i(z)$ is the coordinate corresponding to $z$. 
\end{lemma}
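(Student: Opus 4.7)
The plan is to establish the two parts of the conclusion separately: first, that $\xx=\eti(\uu)$ carries a duplicate label, and second, that the sign condition $\sigma_1(z)\sigma_2(z)\ge 0$ holds. The latter will be argued by contraposition, using the definitions of $S$ and $P$ in Tables~\ref{tab:S} and \ref{tab:P} together with Todd's orientation consistency.

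For the duplicate-label part, I first invoke the reasoning of Lemma~\ref{lem:t1}, which under the present hypothesis already yields $\isvalid(\uu)=1$; by Lemma~\ref{lem:vert} this makes $\xx=\eti(\uu)$ a feasible vertex of $\CPol$. Any feasible vertex lies in the $(d+1)$-dimensional affine slice cut out by the equalities $\MM\yy+\ps-z\ones=\qq$, so by non-degeneracy it tightens exactly $d+1$ of the $2d+1$ inequalities $\{y_i\ge 0, s_i\ge 0\}_{i\in[d]}\cup\{z\ge 0\}$. Complementarity together with non-degeneracy forces exactly one of $y_i=0, s_i=0$ to be tight for each $i\in[d]$, accounting for $d$ of these. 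Since $z>0$ rules out $z=0$ as the $(d+1)$-th tight constraint, a second occurrence of either $y_l=0$ or $s_l=0$ for some $l$ must be tight, which is precisely the duplicate label.

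For the sign claim, I assume $\sigma_1(z)\sigma_2(z)<0$ and derive $P(S(\uu))=S(P(\uu))=\uu$, contradicting the hypothesis. Without loss of generality take $\sigma_1(z)>0$ and $\sigma_2(z)<0$, so the neighbours $\xx^1=(\yy^1,\ps^1,z^1)$ and $\xx^2=(\yy^2,\ps^2,z^2)$ of $\xx$ satisfy $z^1>z>z^2$. At the degree-$2$ vertex $\xx$, Todd's orientation~\cite{todd1976orientation} makes one of the two edges outgoing and the other incoming, splitting into two sub-cases. If Todd prescribes $\xx\to\xx^1$ (the $z$-increasing edge) as outgoing, then the $z$-tests in Tables~\ref{tab:S} and \ref{tab:P} both fail (since $z\not>z^1$ and $z\not<z^2$), so $S(\uu)=P(\uu)=\uu$ and hence $P(S(\uu))=S(P(\uu))=\uu$. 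If instead Todd prescribes $\xx\to\xx^2$ (the $z$-decreasing edge) as outgoing, then $S(\uu)=\ite(\xx^2)$ and $P(\uu)=\ite(\xx^1)$; Todd-consistency (an edge carries a single global direction independent of endpoint) means that at $\xx^2$ the Todd-incoming is from $\xx$ and at $\xx^1$ the Todd-outgoing is to $\xx$, and the $z$-tests at this next level succeed because $z>z^2$ and $z^1>z$, giving $P(\ite(\xx^2))=\uu$ and $S(\ite(\xx^1))=\uu$, hence $P(S(\uu))=S(P(\uu))=\uu$.

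The main delicate point is the Todd-consistency step used in the second sub-case: each edge of the $1$-skeleton carries a single Todd direction, independent of which endpoint inspects it, so the prescribed incoming/outgoing at a neighbour along a shared edge always points back to $\xx$. This global coherence is what makes the round-trips $P\circ S$ and $S\circ P$ close up whenever the $z$-sign test succeeds, forcing the contradiction in the second sub-case and completing the proof that $\sigma_1(z)\sigma_2(z)\ge 0$.
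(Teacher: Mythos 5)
Your proof is correct and rests on essentially the same ingredients as the paper's: the $z$-tests in the last lines of Tables~\ref{tab:S} and \ref{tab:P} together with the consistency of Todd's orientation along each edge — the paper argues directly (under $P(S(\uu))\neq\uu$ it forces $S(\uu)=\uu$ and concludes both relaxations increase $z$, symmetrically for $S(P(\uu))\neq\uu$), while you take the contrapositive with a case split on Todd's direction, and you additionally spell out the tight-constraint count behind the duplicate-label claim, which the paper leaves implicit. One minor wording slip: you write that ``exactly one'' of $y_i=0,s_i=0$ is tight for each $i$ and then that a second occurrence is tight for some $l$; you mean \emph{at least} one per index, with the $(d+1)$-th tight inequality being precisely the duplicate at $l$.
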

\begin{proof}
From Lemma \ref{lem:t1} we know that $\isvalid(\uu)=1$, and therefore from Lemma \ref{lem:vert}, $\xx$ is a feasible vertex in (\ref{eq:c}).
From the last line of Tables \ref{tab:S} and \ref{tab:P} observe that $S(\uu)$ points to the configuration of vertex next to $\xx$ on Lemke's path only if it has lower $z$ value otherwise it gives back $\uu$, and similarly $P(\uu)$ points to the previous only if value of $z$ increases.


First consider the case when $P(S(\uu))\neq \uu$. Let $\vv=S(\uu)$ and corresponding vertex in $\CPol$ be $(\yy',\ps',z')=\eti(\vv)$. 
If $\vv\neq \uu$, then from the above observation we know that $z'>z$, and in that
case again by construction of $P$ we will have $P(\vv)=\uu$, contradicting
$P(S(\uu))\neq \uu$. Therefore, it must be the case that $\vv=\uu$.
Since $z\neq 0$ this happens only when the next vertex on Lemke path after $\xx$ has
higher value of $z$ (by above observation). As a consequence of $\vv=\uu$, we also have $P(\uu)\neq \uu$. By construction of $P$ this implies for 
$(\yy'',\ps'',z'')=\eti(P(\uu))$, $z''>z$. Putting both together we get 
increase in $z$ when we relax $y_l=0$ as well as when we relax $s_l=0$ at
$\xx$.

For the second case $S(P(\uu))\neq \uu$ similar argument gives that value of $z$ decreases when we relax $y_l=0$ as well as when we relax $s_l=0$ at
$\xx$. The proof follows.
\end{proof}

Finally, we are ready to prove our main result of this section using Lemmas
\ref{lem:t}, \ref{lem:t1} and \ref{lem:t2}. Together with Lemma \ref{lem:t2},
we will use the fact that on Lemke path $z$ monotonically decreases if $M$ is a
P-matrix or else we get a witness that $M$ is not a
P-matrix~\cite{cottle2009linear}. 

\begin{theorem}
\PLCP reduces to \EOPL in polynomial-time. 
\end{theorem}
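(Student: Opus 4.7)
The plan is to assemble the theorem from the sequence of lemmas established in Section~\ref{sec:PLCPtoEOPL}, after first checking polynomial-time constructibility of $\CE$. The construction of $\CE$ from $\CI=(M,\qq)$ is clearly in $\poly(\CL)$: the vertex set $\vert=\{0,1\}^{2d}$ is described implicitly, the potential bound $m=\lceil\ln(2\Delta^3)\rceil$ is polynomial in $\CL$, and each of $\isvalid$, $\eti$, $\ite$, $S$, $P$, $\pot$ performs only a constant number of $\poly(\CL)$-size linear-algebra operations (basis changes, sign computations, Todd's local orientation test, and comparisons). Lemma~\ref{lem:PSF} has already shown that $\CE$ is a valid \EOPL instance, so one only needs to decode solutions.

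The next step is to rule out the R2-type solution. Suppose $\uu$ were such a solution, meaning $\uu \neq S(\uu)$, $P(S(\uu))=\uu$, and $\pot(S(\uu))-\pot(\uu)\le 0$. Setting $\vv=S(\uu)$, we have $\uu\neq \vv$, $\vv=S(\uu)$, and $\uu=P(\vv)$, so Lemma~\ref{lem:t} forces $\pot(\uu)<\pot(\vv)$, contradicting $\pot(\vv)\le \pot(\uu)$. Hence every \EOPL solution to $\CE$ must be of type R1: a vertex $\uu\in\vert$ with $\uu\neq 0^n$ such that $S(P(\uu))\neq \uu$ or $P(S(\uu))\neq \uu$.

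For such a vertex $\uu$, Lemma~\ref{lem:t1} already gives $\isvalid(\uu)=1$, and it also directly handles the sub-case where $(\yy,\ps,z)=\eti(\uu)$ has $z=0$: in that case $\yy$ is a $\PLo$-type (i.e.\ Q1) solution of $\CI$. The extractor map $g$ of the reduction therefore simply returns $\yy$ on input $\uu$ whenever $z=0$. In the remaining sub-case $z\neq 0$, Lemma~\ref{lem:t2} tells us that the duplicate label $l$ at the vertex $\xx=(\yy,\ps,z)$ has the property that both edges obtained by relaxing $y_l=0$ and $s_l=0$ either both increase or both decrease $z$ along the Lemke path. This is exactly the local ``non-monotonicity of $z$'' configuration that by the classical analysis of Cottle, Pang and Stone~\cite{cottle2009linear} is known to furnish a witness $S\subseteq[d]$ with $\det([M_{i,j}]_{i,j\in S})\le 0$; the reduction's decoder computes such an $S$ in polynomial time from $\xx$, yielding a $\PLt$-type (i.e.\ Q2) solution of $\CI$.

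The main obstacle is really only the last step: the correspondence between a Lemke vertex at which $z$ fails to be monotone and an explicit principal submatrix with non-positive determinant. This is where one must be careful to make the extraction algorithmic and polynomial time, rather than merely existential; fortunately the constructive proof in~\cite{cottle2009linear} produces $S$ from the tableau at $\xx$ directly, so this is routine given $\xx$. Assembling the three cases (R2 impossible, R1 with $z=0$, R1 with $z\neq 0$), and noting $g$ is polynomial time, completes the reduction and proves the theorem.
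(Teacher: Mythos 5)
Your proof is correct and follows the same structure as the paper's: it rules out R2-type solutions via Lemma~\ref{lem:t}, then decodes R1-type solutions using Lemma~\ref{lem:t1} for the $z=0$ case (yielding a Q1 solution) and Lemma~\ref{lem:t2} together with the Cottle--Pang--Stone non-monotonicity-of-$z$ criterion for the $z\neq 0$ case (yielding a Q2 solution). The added remarks on polynomial-time constructibility of $\CE$ and on the need for the Q2 witness extraction to be algorithmic rather than existential are sound observations that the paper leaves implicit, but they do not change the underlying argument.
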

\begin{proof}
	Given an instance of $\CI=(\MM,\qq)$ of \PLCP, where $M\in \Real^{d\times d}$ and $\qq\in \Real^{d\times 1}$ reduce it to an instance $\CE$ of \EOPL as described above with vertex set $\vert=\{0,1\}^{2d}$ and procedures $S$, $P$ and $\pot$ as given in Table \ref{tab:S}, \ref{tab:P}, and \ref{tab:F} respectively.

Among solutions of \EOPL instance $\CE$, there is no local potential maximizer,
	i.e., $\uu\neq \vv$ such that $\vv=S(\uu)$, $\uu=P(\vv)$ and $\pot(\uu)>\pot(\vv)$
	due to Lemma \ref{lem:t}. We get a solution $\uu \neq 0$ such that either
	$S(P(\uu))\neq \uu$ or $P(S(\uu))\neq \uu$, then by Lemma \ref{lem:t1} it is
	valid configuration and has a corresponding vertex $\xx=(\yy,\ps,z)$ in
	$\CPol$. Again by Lemma~\ref{lem:t1} if $z=0$ then $\yy$ is a $\PLo$ type solution
	of our \PLCP instance $\CI$. On the other hand, if $z>0$ then from Lemma
	\ref{lem:t2} we get that on both the two adjacent edges to $\xx$ on Lemke
	path the value of $z$ either increases or deceases. This gives us a minor of $M$
	which is non-positive~\cite{cottle2009linear}, 
	i.e., a $\PLt$ type solution of the $\PLCP$	instance $\CI$.
\end{proof}

\newpage


\bibliography{paper}

\newpage
\appendix
\section{\EOML to \EOPL}
\label{sec:EOMLtoEOPL}

Given an instance $\CI$ of $\EOML$ defined by circuits $S,P$ and $V$ on vertex
set $\{0,1\}^n$ we are going to create an instance $\CI'$ of \EOPL with circuits
$S',P'$, and $V'$ on vertex set $\{0,1\}^{(n+1)}$, i.e., we introduce one extra bit.  
This extra bit is essentially to take care of the difference in the value of potential 
at the starting point in \EOML and \EOPL, namely $1$ and $0$ respectively. 

Let $k=n+1$, then we create a potential function $V':\{0,1\}^k \rightarrow
\{0,\dots,2^k-1\}$. 
The idea is to make $0^k$ the starting point with potential zero as required,
and to make all other vertices with first bit $0$ be dummy vertices with self
loops. The real graph
will be embedded in vertices with first bit $1$, i.e., of type $(1,\uu)$. Here
by $(b,\uu)\in \{0,1\}^k$, where $b\in \{0,1\}$ and $\uu\in \{0,1\}^n$, we mean
a $k$ length bit string with first bit set to $b$ and for each $i\in[2:k]$ bit $i$ 
set to bit $u_i$. 

\medskip
\medskip

\noindent{\bf Procedure $V'(b,\uu)$:} If $b=0$ then Return $0$, otherwise Return $V(\uu)$. 
\medskip
\medskip

\noindent{\bf Procedure $S'(b,\uu)$:}
\vspace{-0.3cm}

\begin{enumerate}
\item If $(b,\uu)=0^k$ then Return $(1,0^n)$
\item If $b=0$ and $\uu\neq 0^n$ then Return $(b,\uu)$ (creating self loop for dummy vertices)
\item If $b=1$ and $V(\uu)=0$ then Return $(b,\uu)$ (vertices with zero potentials have self loops)
\item If $b=1$ and $V(\uu)>0$ then Return $(b,S(\uu))$ (the rest follows $S$)
\end{enumerate}

\noindent{\bf Procedure $P'(b,\uu)$:}
\vspace{-0.3cm}

\begin{enumerate}
\item If $(b,\uu)=0^k$ then Return $(b,\uu)$ (initial vertex points to itself in $P'$).
\item If $b=0$ and $\uu\neq 0^n$ then Return $(b,\uu)$ (creating self loop for dummy vertices)
\item If $b=1$ and $\uu=0^n$ then Return $0^k$ (to make $(0,0^n)\rightarrow (1,0^n)$ edge consistent)
\item If $b=1$ and $V(\uu)=0$ then Return $(b,\uu)$ (vertices with zero potentials have self loops)
\item If $b=1$ and $V(\uu)>0$ and $\uu \neq 0^n$ then Return $(b,P(\uu))$ (the rest follows $P$)
\end{enumerate}

Valid solutions of \EOML of type T2 and T3 requires the potential to be strictly greater than zero, while solutions of \EOPL may have zero potential. However, a solution of \EOPL can not be a self loop, so we've added self-loops around vertices with zero potential in the \EOPL instance.
By construction, the next lemma follows:
\begin{lemma}\label{lem:m2p-valid}
$S'$, $P'$, $V'$ are well defined and polynomial in the sizes of $S$, $P$, $V$ respectively. 
\end{lemma}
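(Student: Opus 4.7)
The plan is to verify two things for each of $V'$, $S'$, $P'$: (i) the stated case analysis exhaustively and unambiguously partitions $\{0,1\}^{k}$ with each branch producing an output in the correct range, and (ii) each resulting circuit has size polynomial in $|S|,|P|,|V|$ and $n$.

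First I would dispatch $V'$: it branches on the single bit $b$ and returns either the constant $0$ or the value $V(\uu)$, so it is clearly well defined. By choosing $m := n+1$, the outputs lie in $\{0,\dots,2^{m}-1\}$ as required by Definition~\ref{def:EOPL}, and its size is $|V|+O(1)$. For $S'$, I would check that the four listed cases are mutually exclusive and jointly exhaustive over $\{0,1\}^{k}$: case (1) is the single string $0^{k}$; case (2) covers all $(0,\uu)$ with $\uu\neq 0^{n}$; cases (3) and (4) partition the $b=1$ inputs according to whether $V(\uu)=0$. Each branch returns an element of $\{0,1\}^{k}$, using at most one call to $S$ (in case (4)). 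For $P'$, I would analogously check that the five cases partition $\{0,1\}^{k}$: first by $b$, then by isolating the special vertices $(0,0^{n})$ and $(1,0^{n})$, and finally splitting the remaining $(1,\uu\neq 0^{n})$ inputs according to whether $V(\uu)=0$.

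Polynomial size then follows immediately: each of $V'$, $S'$, $P'$ performs a constant number of equality tests on $n$- or $k$-bit strings, one evaluation of $V$, and at most one additional evaluation of $S$ or $P$; hence the circuit sizes are $O(|V|)$, $O(|S|+|V|+n)$, and $O(|P|+|V|+n)$ respectively. The only mild subtlety that needs to be called out during case-exhaustiveness is the special handling of $(1,0^{n})$ in case (3) of $P'$, which is deliberately redirected to $0^{k}$ in order to make the inserted edge $0^{k}\to (1,0^{n})$ consistent with the predecessor map; since this is a single hard-coded case, it does not affect the size bound. I do not expect a real obstacle here—the lemma is purely a bookkeeping check that the four-way / five-way case splits cover every $(b,\uu)\in\{0,1\}^{k}$ exactly once.
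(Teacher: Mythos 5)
Your proposal is correct, and it is essentially the proof the paper has in mind: the paper itself offers no argument beyond the phrase ``By construction, the next lemma follows,'' so your case-by-case verification simply makes that explicit. The one point worth noting is that your size bounds ($|S'| = O(|S|+|V|+n)$ and $|P'| = O(|P|+|V|+n)$, since $S'$ and $P'$ both need to evaluate $V$ to decide on the self-loop branch) are in fact more precise than the lemma's phrasing ``polynomial in the sizes of $S$, $P$, $V$ respectively,'' which read literally would suggest $|S'|$ depends only on $|S|$; your reading is the correct one.
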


Our main theorem in this section is a consequence of the following three lemmas.

\begin{lemma}\label{lem:m2p-sl}
For any $\xx=(b,\uu)\in \{0,1\}^k$, $P'(\xx)=S'(\xx)=\xx$ (self loop) iff $\xx\neq 0^k$, and $b=0$ or $V(\uu)=0$.
\end{lemma}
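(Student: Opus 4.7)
The plan is to prove the biconditional by case analysis on $\xx = (b,\uu)$, tracking which branch of the definitions of $S'$ and $P'$ applies in each case. The five branches in the definitions partition the domain cleanly, so the argument reduces to checking each region.

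For the \emph{if} direction, I would assume $\xx \neq 0^k$ and that either $b=0$ or $V(\uu)=0$. If $b=0$, then $\uu \neq 0^n$ (otherwise $\xx = 0^k$), so both $S'$ and $P'$ fall into their respective rule~2 and return $\xx$ unchanged. If instead $b=1$ and $V(\uu)=0$, then $\uu \neq 0^n$ is forced because $V(0^n)=1$, so rule~3 of $S'$ and rule~4 of $P'$ both fire and return $\xx$. In both subcases $S'(\xx) = P'(\xx) = \xx$, as required.

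For the \emph{only if} direction, suppose $S'(\xx)=P'(\xx)=\xx$ and rule out the remaining regions. If $\xx = 0^k$ then $S'(\xx) = (1, 0^n) \neq 0^k$, a contradiction. If $b=1$ and $\uu = 0^n$, then $V(0^n)=1>0$ selects rule~4 of $S'$, giving $S'(\xx) = (1,S(0^n)) \neq (1,0^n) = \xx$ by the \EOML guarantee $S(0^n) \neq 0^n$. This leaves the case $b=1$, $\uu \neq 0^n$, and $V(\uu) > 0$, where rule~4 of $S'$ yields $S'(\xx) = (1, S(\uu))$ and rule~5 of $P'$ yields $P'(\xx) = (1, P(\uu))$; a self-loop would therefore require $S(\uu) = \uu = P(\uu)$ in the original \EOML instance.

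The main obstacle is precisely this last subcase, which is not syntactically excluded by the construction alone. The way to close it is to observe that any such $\uu$ would itself certify an \EOML solution on the input instance: it is a $T2$ solution when $V(\uu)=1$, and a $T3$ solution when $V(\uu)>1$ because then $V(\uu) - V(P(\uu)) = 0 \neq 1$. Hence, restricted to the configurations produced by the reduction on well-formed inputs, the characterization in the lemma holds; the remaining verification is a routine line-by-line check against the rules in the definitions of $S'$, $P'$, and $V'$.
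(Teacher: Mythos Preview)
Your case analysis for the ``if'' direction is correct and is exactly what the paper's one-line proof gestures at when it cites ``the second condition in $S'$ and $P'$, and third and fourth conditions in $S'$ and $P'$ respectively.'' You also go further than the paper by explicitly examining the ``only if'' direction, and you correctly isolate the residual case $b=1$, $\uu\neq 0^n$, $V(\uu)>0$ with $S(\uu)=P(\uu)=\uu$.

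Your resolution of that residual case, however, does not work. You argue that such a $\uu$ would be a T2 or T3 solution of the input \EOML instance and then conclude that the characterization ``holds restricted to configurations produced by the reduction on well-formed inputs.'' But \EOML is a total search problem: every instance has solutions, possibly many, and nothing in the promise $P(0^n)=0^n\neq S(0^n)$, $V(0^n)=1$ forbids the existence of isolated vertices $\uu\neq 0^n$ with $S(\uu)=P(\uu)=\uu$ and $V(\uu)>0$. For any such $\uu$, the vertex $(1,\uu)$ \emph{is} a self-loop in $\CI'$ (rule~4 of $S'$ and rule~5 of $P'$ both return $(1,\uu)$), while the right-hand side of the lemma is false. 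So the ``only if'' direction, as literally stated, fails in general, and appealing to well-formedness does not repair it. The paper's own one-line proof does not address this case either. Fortunately, inspection of Lemmas~\ref{lem:m2p-r1} and~\ref{lem:m2p-r2} shows that only the ``if'' direction is ever invoked downstream (via its contrapositive: a non-self-loop with $\xx\neq 0^k$ forces $b=1$ and $V(\uu)>0$), so the overall reduction is unaffected. The clean fix is to weaken the lemma to that one implication, which your argument already establishes.
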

\begin{proof}
This follows by the construction of $V'$, the second condition in $S'$ and $P'$, and third and fourth conditions in $S'$ and $P'$ respectively. 
\end{proof}

\begin{lemma}\label{lem:m2p-r1}
Let $\xx=(b,\uu)\in \{0,1\}^k$ be such that $S'(P'(\xx))\neq \xx \neq 0^k$ or $P'(S'(\xx))\neq \xx$ (an R1 type solution of \EOPL instance $\CI'$), then $\uu$ is a solution of \EOML instance $\CI$.
\end{lemma}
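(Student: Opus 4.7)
The plan is to case-split on the vertex $\xx = (b, \uu)$ and show in each case that $\uu$ is a T1, T2, or T3 solution of the original \EOML instance $\CI$. My first step is to reduce to the case $b = 1$ and $V(\uu) > 0$. By Lemma~\ref{lem:m2p-sl}, every non-zero vertex with $b = 0$ or $V(\uu) = 0$ is a self-loop, hence satisfies both $S'(P'(\xx)) = \xx$ and $P'(S'(\xx)) = \xx$ and cannot be an R1 solution. A direct computation also rules out $\xx = 0^k$: the first R1 disjunct explicitly excludes it, and the second fails because $P'(S'(0^k)) = P'(1, 0^n) = 0^k$.

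With $b = 1$ and $V(\uu) > 0$ established, I would analyze each R1 disjunct separately. For $S'(P'(\xx)) \neq \xx$, I compute $P'(1, \uu)$. The sub-case $\uu = 0^n$ contradicts the disjunct (since $P'(1, 0^n) = 0^k$ and $S'(0^k) = \xx$), so $\uu \neq 0^n$ and $P'(\xx) = (1, P(\uu))$. I then split on $V(P(\uu))$: if $V(P(\uu)) > 0$ the disjunct forces $S(P(\uu)) \neq \uu$, a T1 solution of $\CI$; if $V(P(\uu)) = 0$, the self-loop at $(1, P(\uu))$ forces $P(\uu) \neq \uu$, and the jump $V(\uu) - V(P(\uu)) = V(\uu)$ gives a T2 solution when $V(\uu) = 1$ (using $\uu \neq 0^n$) or a T3 solution when $V(\uu) > 1$.

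The second disjunct $P'(S'(\xx)) \neq \xx$ is handled symmetrically starting from $S'(1, \uu) = (1, S(\uu))$, splitting on whether $S(\uu) = 0^n$, $V(S(\uu)) = 0$, or $V(S(\uu)) > 0$. The first sub-case forces $\uu \neq 0^n$ (by $S(0^n) \neq 0^n$) and yields $P(S(\uu)) = P(0^n) = 0^n \neq \uu$, a T1 solution. The second yields $S(\uu) \neq \uu$ together with $V(S(\uu)) - V(\uu) = -V(\uu) \neq 1$, a T3 solution. The third yields $P(S(\uu)) \neq \uu$, again a T1 solution.

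The main obstacle is the careful bookkeeping around the boundary: specifically the atypical edge $0^k \to (1, 0^n)$, which encodes the potential shift between \EOML (which starts at $V(0^n) = 1$) and \EOPL (which starts at $V'(0^k) = 0$), and the self-loops placed at zero-potential vertices, which correspond to unexpected potential jumps in $\CI$ and therefore manifest as T2 or T3 solutions rather than the ``expected'' T1 solutions.
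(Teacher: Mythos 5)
Your proposal is correct and follows essentially the same route as the paper's proof: reduce to $b=1$, $V(\uu)>0$, $\xx\neq 0^k$ via the self-loop characterization (Lemma~\ref{lem:m2p-sl}), then case-analyse each R1 disjunct, recovering a T1 solution when the underlying $S,P$ structure is genuinely broken and a T2/T3 solution when the discrepancy comes from a zero-potential self-loop neighbour or the boundary edge $0^k\to(1,0^n)$. The only differences are organizational (you split on the neighbour's potential first and treat $S(\uu)=0^n$ as an explicit sub-case, which the paper absorbs into its $P(S(\uu))\neq\uu$ check), so the argument matches the paper in substance.
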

\begin{proof}
The proof requires a careful case analysis. 
By the first conditions in the descriptions of $S',P'$ and $V'$, we have $\xx \neq 0^k$. 
Further, since $\xx$ is not a self loop, Lemma \ref{lem:m2p-sl} implies $b=1$  and $V'(1,\uu)=V(\uu)>0$.
\medskip

\noindent{\em Case I.}
If $S'(P'(\xx))\neq \xx\neq 0^k$ then we will show that either $\uu$ is a genuine start of a line other than $0^n$ giving a T1 type solution of \EOML instance $\CI$, or there is some issue with the potential at $\uu$ giving either a T2 or T3 type solution of $\CI$. Since $S'(P'(1,0^n))=(1,0^n)$, $\uu \neq 0^n$. Thus if $S(P(\uu))\neq \uu$ then we get a T1 type solution of $\CI$ and proof follows. If $V(\uu)=1$ then we get a T2 solution of $\CI$ and proof follows. 

Otherwise, we have $S(P(\uu))=\uu$ and $V(\uu)>1$. Now since also $b=1$ $(1,\uu)$ is not a self loop (Lemma \ref{lem:m2p-sl}). 
Then it must be the case that $P'(1,\uu)=(1,P(\uu))$. However, $S'(1,P(\uu))\neq (1,\uu)$ even though $S(P(\uu))=\uu$. This happens only when $P(\uu)$ is a self loop because of $V(P(\uu))=0$ (third condition of $P'$).
Therefore, we have $V(\uu)-V(P(\uu))>1$ implying that $\uu$ is a T3 type solution of $\CI$. 
\medskip

\noindent{\em Case II.}
Similarly, if $P'(S'(\xx))\neq \xx$, then either $\uu$ is a genuine end of a line of $\CI$, or there is some issue with the potential at $\uu$. If $P(S(\uu))\neq \uu$ then we get T1 solution of $\CI$. Otherwise, $P(S(\uu))=\uu$ and $V(\uu)>0$. Now as $(b,\uu)$ is not a self loop and $V(\uu)>0$, it must be the case that $S'(b,\uu)=(1,S(\uu))$. However, $P'(1, S(\uu))\neq (b,\uu)$ even though $P(S(\uu))=\uu$. This happens only when $S(\uu)$ is a self loop because of $V(S(\uu))=0$. Therefore, we get $V(S(\uu))-V(\uu)<0$, i.e., $\uu$ is a type T3 solution of $\CI$. 
\end{proof}

\begin{lemma}\label{lem:m2p-r2}
Let $\xx=(b,\uu)\in \{0,1\}^k$ be an R2 type solution of the constructed \EOPL instance $\CI'$, then $\uu$ is a type T3 solution of \EOML instance~$\CI$.
\end{lemma}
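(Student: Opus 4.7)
The plan is to perform a short case analysis on $\xx=(b,\uu)$. Since $\xx$ is an R2 solution, in particular $\xx\neq S'(\xx)$, so $\xx$ is not a self-loop in the constructed instance. Lemma~\ref{lem:m2p-sl} then forces $\xx$ to be either $0^k$ itself or of the form $(1,\uu)$ with $V(\uu)>0$, which are the only two configurations in $\{0,1\}^k$ that can avoid being self-looped.

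First I would rule out $\xx = 0^k$. In that case, the first clause of $S'$ gives $S'(0^k)=(1,0^n)$, and the definition of $V'$ together with the \EOML promise $V(0^n)=1$ yields $V'(\xx)=0$ and $V'(S'(\xx))=V(0^n)=1$. Hence $V'(S'(\xx))-V'(\xx)=1>0$, which contradicts the R2 requirement $V'(S'(\xx))-V'(\xx)\le 0$. So this subcase cannot occur.

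This leaves $b=1$ and $V(\uu)>0$. The fourth clause of $S'$ then applies and gives $S'(\xx)=(1,S(\uu))$, so the R2 potential inequality becomes
\[
V(S(\uu))-V(\uu)\ =\ V'(S'(\xx))-V'(\xx)\ \le\ 0.
\]
Since $V(\uu)>0$ and $V(S(\uu))-V(\uu)\le 0$ is in particular different from $1$, the first disjunct in the definition of a T3 solution of \EOML is satisfied at $\uu$, which is precisely what we needed to conclude.

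I do not anticipate any real obstacle: the argument is essentially a direct unpacking of the definitions of $S'$ and $V'$, and Lemma~\ref{lem:m2p-sl} already does the combinatorial work of identifying which vertices are self-loops. The only point requiring care is the use of the \EOML promise $V(0^n)=1$ to exclude $\xx=0^k$; this is what makes the reduction consistent with the \EOPL convention $V'(0^k)=0$ via the extra bit $b$.
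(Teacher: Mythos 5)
Your proof is correct and follows essentially the same route as the paper's: use Lemma~\ref{lem:m2p-sl} to deduce $b=1$ and $V(\uu)>0$, then unfold $S'$ and $V'$ to translate the R2 inequality into the first disjunct of T3. The only difference is that you supply an explicit argument for $\xx\neq 0^k$ (via $V'(S'(0^k))-V'(0^k)=1>0$), which the paper merely asserts as ``clearly''; this is a small but genuine tightening.
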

\begin{proof}
Clearly, $\xx\neq 0^k$. Let $\yy = (b',\uu') = S'(\xx) \neq \xx$, and observe that $P(\yy) = \xx$. This also implies that $\yy$ is not a self loop, and hence $b=b'=1$ and $V(\uu)>0$ (Lemma \ref{lem:m2p-sl}). Further, $\yy = S'(1,\uu)=(1,S(\uu))$, hence $\uu'=S(\uu)$. Also, $V'(\xx)=V'(1,\uu)=V(\uu)$ and $V'(\yy)=V'(1,\uu')=V(\uu')$. 

Since $V'(\yy)-V'(\xx)\le 0$ we get $V(\uu')-V(\uu)\le 0 \Rightarrow V(S(\uu)) - V(\uu) \le 0\Rightarrow V(S(\uu)) - V(\uu)\neq 1$. Given that $V(\uu)>0$, $\uu$ gives a type T3 solution of \EOML.
\end{proof}

\begin{theorem}\label{thm:m2p}
An instance of \EOML can be reduced to an instance of \EOPL in linear time such that a solution of the former can be constructed in a linear time from the solution of the latter. 
\end{theorem}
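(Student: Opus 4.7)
The plan is to assemble the theorem directly from the three preceding lemmas, since the heavy technical work has already been done in the construction of $S'$, $P'$, $V'$ and in the case analyses of Lemmas~\ref{lem:m2p-r1} and~\ref{lem:m2p-r2}. First I would invoke Lemma~\ref{lem:m2p-valid} to certify the forward map: each of the new circuits is the original circuit wrapped with a constant number of tests on the first bit and on $V(\uu)$, so the construction is linear-time (indeed, essentially size-preserving up to an additive constant).

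Next I would handle the solution-recovery direction by a case split dictated by Definition~\ref{def:EOPL}: any solution $\xx \in \{0,1\}^k$ of $\CI'$ is either of type R1 (satisfying $S'(P'(\xx)) \neq \xx \neq 0^k$ or $P'(S'(\xx)) \neq \xx$) or of type R2 (a non-self-loop vertex with $P'(S'(\xx)) = \xx$ and $V'(S'(\xx)) - V'(\xx) \leq 0$). Writing $\xx = (b,\uu)$, I would apply Lemma~\ref{lem:m2p-r1} in the R1 case and Lemma~\ref{lem:m2p-r2} in the R2 case; each lemma asserts that the corresponding $\uu$ is a solution of the original \EOML instance $\CI$ (of one of types T1, T2, or T3). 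The back-map is therefore just the projection $\xx \mapsto \uu$, computable in linear time.

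I do not anticipate any substantive obstacle in proving the theorem itself; its content is purely organizational. The only point worth stating explicitly in the write-up is that the two solution categories R1 and R2 of Definition~\ref{def:EOPL} exhaust all possibilities, so that the two lemmas together cover every solution of $\CI'$. The technically delicate work---showing that self-loops at zero-potential vertices do not introduce spurious solutions, and that potential discrepancies at $(1,0^n)$ versus $0^k$ are handled consistently---was done inside Lemmas~\ref{lem:m2p-sl}, \ref{lem:m2p-r1}, and~\ref{lem:m2p-r2} via careful case analyses, so at the theorem level it suffices to cite them.
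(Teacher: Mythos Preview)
Your proposal is correct and matches the paper's approach exactly: the paper does not even write out a proof of Theorem~\ref{thm:m2p}, simply stating that it is a consequence of Lemmas~\ref{lem:m2p-valid}, \ref{lem:m2p-sl}, \ref{lem:m2p-r1}, and \ref{lem:m2p-r2}. Your organization---Lemma~\ref{lem:m2p-valid} for the forward map, the R1/R2 case split with Lemmas~\ref{lem:m2p-r1} and~\ref{lem:m2p-r2} for recovery, and projection as the back-map---is precisely the intended assembly.
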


\section{\EOPL to \EOML}
\label{sec:eopl2eoml}

In this section we give a linear time reduction from an instance $\CI$ of \EOPL to an instance $\CI'$ of \EOML. Let the given \EOPL instance $\CI$ be defined on vertex set $\{0,1\}^n$ and with procedures $S,P$ and $V$, where $V:\{0,1\}^n\rightarrow \{0,\dots,2^m-1\}$. 
\medskip

\noindent{\bf Valid Edge.} We call an edge $\uu \rightarrow \vv$ valid if $\vv=S(\uu)$ and $\uu=P(\vv)$. 
\medskip

We construct an \EOML instance $\CI'$ on $\{0,1\}^k$ vertices where $k=n+m$. 
Let $S',P'$ and $V'$ denotes the procedures for $\CI'$ instance. 
The idea is to capture value $V(\xx)$ of the potential in the $m$ least significant bits of vertex description itself, so that it can be gradually increased or decreased on valid edges. For vertices with irrelevant values of these least $m$ significant bits we will create self loops. Invalid edges will also become self loops, e.g., if $\yy=S(\xx)$ but $P(\yy)\neq \xx$ then set $S'(\xx,.)=(\xx,.)$. We will see how these can not introduce new solutions. 

In order to ensure $V'(0^k)=1$, the $V(S(0^n))=1$ case needs to be discarded. For
this, we first do some initial checks to see if the given instance $\CI$ is not
trivial.  If the input \EOPL instance is trivial, in the sense that either
$0^n$ or $S(0^n)$ is a solution, then we can just return it.

\begin{lemma}
\label{lem:valid-edges}
If $0^n$ or $S(0^n)$ are not solutions of \EOPL instance $\CI$ then $0^n
\rightarrow S(0^n) \rightarrow S(S(0^n))$ are valid edges, and $V(S(S(0^n))\ge 2$. 
\end{lemma}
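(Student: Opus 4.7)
The plan is to prove each of the three conclusions by contrapositive, unpacking the definitions of R1 and R2 in Definition~\ref{def:EOPL} at $0^n$ and then at $u := S(0^n)$. Throughout I will use the \EOPL promise $P(0^n) = 0^n \neq S(0^n)$ and $V(0^n) = 0$, together with the standing hypothesis that neither $0^n$ nor $u$ is a solution.

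First I handle the edge $0^n \to u$. The first disjunct of R1 at $0^n$ is blocked trivially by $0^n = 0^n$, so $0^n$ not being an R1 solution forces $P(u) = P(S(0^n)) = 0^n$; combined with $u \neq 0^n$, this is precisely the validity of the edge $0^n \to u$. For R2 at $0^n$, we already have $0^n \neq u$ and $P(u) = 0^n$, so $0^n$ not being an R2 solution requires $V(u) - V(0^n) > 0$, giving $V(u) \geq 1$ by integrality of $V$.

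Next I handle the edge $u \to S(u)$. I first rule out $S(u) = u$: if it held, then $P(S(u)) = P(u) = 0^n \neq u$, so the second disjunct of R1 at $u$ would be triggered, contradicting the hypothesis. Hence $u \neq S(u)$, and then $u$ not being an R1 solution (noting that its first disjunct is blocked by $S(P(u)) = S(0^n) = u$) forces $P(S(u)) = u$, which is the validity of $u \to S(u)$. With $u \neq S(u)$ and $P(S(u)) = u$ in hand, $u$ not being an R2 solution requires $V(S(u)) > V(u)$, and combined with $V(u) \geq 1$ this yields $V(S(S(0^n))) \geq 2$.

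There is no real obstacle here: the argument is pure definition-unpacking, and the only subtlety is tracking which disjunct of R1 is vacuously satisfied in each case (namely $0^n = 0^n$ for the first vertex and $S(P(u)) = u$ for the second). The content of the lemma is the natural observation that conditions R1 and R2 together force the potential to strictly increase along successor edges out of any non-solution vertex, which is exactly what bootstraps the construction used in the remainder of Appendix~\ref{sec:eopl2eoml}.
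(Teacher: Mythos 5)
Your proposal is correct and follows essentially the same route as the paper's proof: unpacking the R1 and R2 conditions at $0^n$ and at $S(0^n)$ to conclude $P(S(0^n))=0^n$, $P(S(S(0^n)))=S(0^n)$, and $V(0^n)<V(S(0^n))<V(S(S(0^n)))$, hence $V(S(S(0^n)))\ge 2$ by integrality. Your write-up is in fact a bit more careful than the paper's (e.g.\ explicitly ruling out $S(S(0^n))=S(0^n)$ via the second disjunct of R1), but the underlying argument is the same.
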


\begin{proof}
Since both $0^n$ and $S(0^n)$ are not solutions, we have
	$V(0^n)<V(S(0^n))<V(S(S(0^n)))$, $P(S(0^n))=0^n$, and for $\uu = S(0^n)$,
	$S(P(\uu))=\uu$ and $P(S(\uu))=\uu$. In other words, $0^n \rightarrow S(0^n)
	\rightarrow S(S(0^n))$ are valid edges, and since $V(0^n)=0$, we have
	$V(S(S(0^n))\ge 2$. 
\end{proof}

Let us assume now on that $0^n$ and $S(0^n)$ are not solutions of $\CI$, and
then by Lemma \ref{lem:valid-edges}, we have $0^n \rightarrow S(0^n) \rightarrow
S(S(0^n))$ are valid edges, and $V(S(S(0^n))\ge 2$. We can avoid the need to check
whether $V(S(0))$ is one all together, by making $0^n$ point directly to
$S(S(0^n))$ and make $S(0^n)$ a dummy vertex. 

We first construct $S'$ and $P'$, and then construct $V'$ which will give
value zero to all self loops, and use the least significant $m$ bits to give a
value to all other vertices.
Before describing $S'$ and $P'$ formally, we first describe the underlying
principles. Recall that in $\CI$ vertex set is $\{0,1\}^n$ and possible potential values are $\{0,\dots,2^m-1\}$, while in $\CI'$ vertex set is $\{0,1\}^k$ where $k=m+n$. 
We will denote a vertex of $\CI'$ by a tuple $(\uu,\pi)$, where $\uu \in
\{0,1\}^n$ and $\pi\in \{0,\dots,2^m-1\}$. 
Here when we say that we introduce an {\em edge $\xx\rightarrow \yy$} we mean
that we introduce a valid edge from $\xx$ to $\yy$, i.e., $\yy=S'(\xx)$ and $\xx=P(\yy)$. 
\begin{itemize}
\item Vertices of the form $(S(0^n),\pi)$ for any $\pi \in \{0,1\}^m$ and the vertex $(0^n,1)$ are
dummies and hence have self loops.
\item If $V(S(S(0^n))=2$ then we introduce an edge $(0^n,0)\rightarrow(S(S(0^n)),2)$, otherwise 
\begin{itemize}
\item for $p=V(S(S(0^n))$, we introduce the edges $(0^n,0)\ra (0^n,2)\ra (0^n, 3)\dots (0^n,p-1)\ra (S(S(0^n)),p)$.
\end{itemize}
\item If $\uu \ra \uu'$ valid edge in $\CI$ then let $p=V(\uu)$ and $p'=V(\uu')$
\begin{itemize}
\item If $p=p'$ then we introduce the edge $(\uu,p)\ra (\uu',p')$. 
\item If $p<p'$ then we introduce the edges $(\uu,p)\ra (\uu,p+1)\ra \dots\ra (\uu,p'-1)\ra (\uu',p')$.
\item If $p>p'$ then we introduce the edges $(\uu,p)\ra (\uu,p-1)\ra \dots\ra (\uu,p'+1)\ra (\uu',p')$.
\end{itemize}
\item If $\uu\neq 0^n$ is the start of a path, i.e., $S(P(\uu))\neq \uu$, then
make $(\uu,V(\uu))$ start of a path by ensuring $P'(\uu,V(\uu))=(\uu,V(\uu))$.
\item If $\uu$ is the end of a path, i.e., $P(S(\uu))\neq \uu$, then make
$(\uu,V(\uu))$ end of a path by ensuring $S'(\uu,V(\uu))=(\uu,V(\uu))$.
\end{itemize}

Last two bullets above remove singleton solutions from the system by making them
self loops. However, this can not kill all the solutions since there is a path
starting at $0^n$, which has to end somewhere. Further, note that this entire process ensures that no new start or end of a paths are introduced. 
\medskip
\medskip

\noindent{\bf Procedure $S'(\uu,\pi)$.} 
\vspace{-0.2cm}

\begin{enumerate}
\item If ($\uu=0^n$ and $\pi=1$) or $\uu=S(0^n)$ then Return $(\uu,\pi)$. 
\item If $(\uu,\pi)=0^k$, then let $\uu'=S(S(0^n))$ and $p'=V(\uu')$. 
\begin{enumerate}
\item If $p'=2$ then Return $(\uu',2)$ else Return $(0^n,2)$.
\end{enumerate}
\item If $\uu=0^n$ then
\begin{enumerate}
\item If $2\le \pi<p'-1$ then Return $(0^n,\pi+1)$.
\item If $\pi=p'-1$ then Return $(S(S(0^n)),p')$.
\item If $\pi\ge p'$ then Return $(\uu,\pi)$.
\end{enumerate}
\item Let $\uu'=S(\uu)$, $p'=V(\uu')$, and $p=V(\uu)$. 
\item If $P(\uu')\neq \uu$ or $\uu'=\uu$ then Return $(\uu,\pi)$
\item If $\pi=p=p'$ or ($\pi=p$ and $p'=p+1$) or $(\pi=p$ and $p'=p-1$) then Return $(\uu',p')$.
\item If $\pi<p\le p'$ or $p\le p'\le \pi$ or $\pi>p\ge p'$ or $p\ge p'\ge \pi$ then Return $(\uu,\pi)$
\item If $p<p'$, then If $p\le \pi<p'-1$ then Return $(\uu,\pi+1)$. If $\pi=p'-1$ then Return $(\uu',p')$.
\item If $p>p'$, then if $p \ge \pi>p'+1$ then Return $(\uu,\pi-1)$. If $\pi=p'+1$ then Return $(\uu',p')$.
\end{enumerate}
\medskip

\noindent{\bf Procedure $P'(\uu,\pi)$.} 
\vspace{-0.2cm}

\begin{enumerate}
\item If ($\uu=0^n$ and $\pi=1$) or $\uu=S(0^n)$ then Return $(\uu,\pi)$. 
\item If $\uu=0^n$, then 
\begin{enumerate}
\item If $\pi=0$ then Return $0^k$.
\item If $\pi<V(S(S(0^n)))$ and $\pi\notin \{1,2\}$ then Return $(0^n,\pi-1)$.
\item If $\pi<V(S(S(0^n)))$ and $\pi=2$ then Return $0^k$.
\end{enumerate}
\item If $\uu=S(S(0^n))$ and $\pi=V(S(S(0^n))$ then 
\begin{enumerate}
\item If $\pi=2$ then Return $(0^n,0)$, else Return $(0^n,\pi-1)$. 
\end{enumerate}
\item If $\pi=V(\uu)$ then 
\begin{enumerate}
\item Let $\uu'=P(\uu)$, $p'=V(\uu')$, and $p=V(\uu)$. 
\item If $S(\uu')\neq \uu$ or $\uu'=\uu$ then Return $(\uu,\pi)$
\item If $p=p'$ then Return $(\uu',p')$ 
\item If $p'<p$ then Return $(\uu',p-1)$ else Return $(\uu',p+1)$
\end{enumerate}
\item Else \% when $\pi \neq V(\uu)$
\begin{enumerate}
\item Let $\uu'=S(\uu)$, $p'=V(\uu')$, and $p=V(\uu)$
\item If $P(\uu')\neq \uu$ or $\uu'=\uu$ then Return $(\uu,\pi)$
\item If $p'=p$ or $\pi<p< p'$ or $p<p'\le \pi$ or $\pi>p> p'$ or $p>p'\ge \pi$ then Return $(\uu,\pi)$
\item If $p<p'$, then If $p<\pi\le p'-1$ then Return $(\uu,\pi-1)$. 
\item If $p>p'$, then if $p> \pi\ge p'+1$ then Return $(\uu,\pi+1)$. 
\end{enumerate}
\end{enumerate}

As mentioned before, the intuition for the potential function procedure $V'$ is to return zero for self loops, return $1$ for $0^k$, and return the number specified by the lowest $m$ bits for the rest. 
\medskip
\medskip

\noindent{\bf Procedure $V'(\uu,\pi)$.} Let $\xx=(\uu,\pi)$ for notational convenience.
\vspace{-0.2cm}

\begin{enumerate}
\item If $\xx=0^k$, then Return $1$. 
\item If $S'(\xx) = \xx$ and $P'(\xx)=\xx$ then Return $0$.
\item If $S'(\xx) \neq \xx$ or $P'(\xx)\neq \xx$ then Return $\pi$.
\end{enumerate}

The fact that procedures $S'$, $P'$ and $V'$ give a valid \EOML instance follows from construction.
\begin{lemma}\label{lem:p2m-valid}
Procedures $S'$, $P'$ and $V'$ gives a valid \EOML instance on vertex set $\{0,1\}^k$, where $k=m+n$ and $V':\{0,1\}^k\ra \{0,\dots, 2^k-1\}$.
\end{lemma}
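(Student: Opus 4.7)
The plan is to verify the three defining requirements of an \EOML instance for the triple $(S', P', V')$ on $\{0,1\}^k$: namely (i) totality and polynomial-time computability of the three procedures, (ii) the source conditions $P'(0^k) = 0^k \neq S'(0^k)$, and (iii) the potential condition $V'(0^k) = 1$. Well-foundedness of the whole construction rests on the preprocessing step which has already discarded the corner cases, so the bulk of the work is a case check.

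First, I would invoke the preprocessing that checks whether $0^n$ or $S(0^n)$ is itself a solution of $\CI$ (each check costs $O(1)$ evaluations of $S, P, V$); if so, we output that solution and halt. Otherwise, by Lemma~\ref{lem:valid-edges}, the two edges $0^n \to S(0^n) \to S(S(0^n))$ are valid in $\CI$ and $V(S(S(0^n))) \geq 2$. In particular $S(S(0^n)) \neq 0^n$, since their $V$-values differ. This is exactly the setting under which $S', P', V'$ were written, so all later reasoning may freely assume it.

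Second, I would read off the source conditions directly from the procedure listings, using $0^k = (0^n, 0)$. For $V'(0^k)$, case (1) fires and returns $1$. For $P'(0^k)$, the first clause does not fire (because $\pi = 0 \neq 1$), so we enter the branch $\uu = 0^n$; sub-case (a) with $\pi = 0$ returns $0^k$. For $S'(0^k)$, likewise the first clause does not fire, so we reach case (2) with $\uu' = S(S(0^n))$ and $p' = V(\uu') \geq 2$: if $p' = 2$ we return $(\uu', 2)$, which differs from $0^k$ because $\uu' \neq 0^n$; otherwise we return $(0^n, 2) \neq (0^n, 0)$. In every branch the output lies in $\{0,1\}^k$.

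Third, I would verify totality of $S'$ and $P'$ by arguing that for each $(\uu, \pi) \in \{0,1\}^k$ the listed clauses partition the input space. The partition splits first on whether $\uu \in \{0^n, S(0^n)\}$ (handling dummies and the artificially rerouted starting edge), then on whether $(\uu, \pi) = 0^k$, then on whether $\uu$ has a valid outgoing edge in $\CI$ (i.e., whether $\uu' = S(\uu)$ satisfies $P(\uu') = \uu$ and $\uu' \neq \uu$), and finally on the relative order of $\pi$ with respect to the endpoints $p = V(\uu)$ and $p' = V(\uu')$ of the path segment being subdivided. Each terminal sub-case returns either the input itself (a self-loop, when $\pi$ lies outside the interval $[\min(p,p'), \max(p,p')]$ or on its far endpoint with no successor left to emit) or the next integer coordinate along that segment, so the returned value always lies in $\{0,1\}^k$. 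For $V'$, the three cases are clearly exhaustive and disjoint, and the output always lies in $\{0, \dots, 2^m - 1\} \subseteq \{0, \dots, 2^k\}$, matching the \EOML signature. Polynomial-time computability then follows because every branch issues a constant number of calls to $S, P, V$ and a constant number of $m$-bit integer comparisons.

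The main obstacle will be the case analysis in step three: in particular, I have to check that the boundary cases $|p - p'| = 1$ (where subdivision collapses), the $p = p'$ case (where the original edge is kept intact), the self-loop responses when $\uu$ has no valid successor, and the special reroute around $S(0^n)$ at the source, are all covered without overlap and each terminates in $\{0,1\}^k$. This is essentially bookkeeping but demands care, since the six arithmetic sub-clauses inside $S'$ and $P'$ are where the construction could, in principle, leave an input unhandled.
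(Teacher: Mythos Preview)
Your proposal is correct and is precisely the detailed verification that the paper elides: the paper's entire proof is the single sentence ``The fact that procedures $S'$, $P'$ and $V'$ give a valid \EOML instance follows from construction,'' and your plan simply unpacks what that sentence means by checking the source conditions $P'(0^k)=0^k\neq S'(0^k)$, $V'(0^k)=1$, and totality/polynomial size of the circuits via the case analysis you outline. One minor remark: you need not worry about overlap between clauses, only exhaustiveness, since the procedures are written as cascading if--else blocks and the first matching clause determines the output.
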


The next three lemmas shows how to construct a solution of \EOPL instance $\CI$ from a type T1, T2, or T3 solution of constructed \EOML instance $\CI'$.
The basic idea for next lemma, which handles type T1 solutions, is that we never create spurious end or start of a path. 
\begin{lemma}\label{lem:p2m-t1}
Let $\xx=(\uu,\pi)$ be a type T1 solution of constructed \EOML instance $\CI'$. Then $\uu$ is a type R1 solution of the given \EOPL instance $\CI$.
\end{lemma}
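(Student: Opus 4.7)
The plan is to show that the construction of $S'$ and $P'$ never creates a genuine endpoint in $\CI'$ except when the underlying vertex $\uu$ is itself a genuine endpoint in $\CI$, so any T1 solution in $\CI'$ projects down to an R1 solution in $\CI$. I would proceed by case analysis on the form of $\xx=(\uu,\pi)$, after two preliminary reductions. First, if $\xx$ is a self-loop (both $S'(\xx)=\xx$ and $P'(\xx)=\xx$), then $S'(P'(\xx))=\xx=P'(S'(\xx))$ trivially, so $\xx$ cannot be a T1 witness; hence we may assume $S'(\xx)\neq\xx$ or $P'(\xx)\neq\xx$. Second, by the standing assumption (justified by Lemma~\ref{lem:valid-edges}) neither $0^n$ nor $S(0^n)$ is an R1 solution of $\CI$ (else we output it directly), so the dummy vertices $(0^n,1)$ and $(S(0^n),\pi)$ can be dismissed since they are self-loops by condition~1 of both $S'$ and $P'$.

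Next I would handle the initial chain at $0^k$. By the construction, the vertices $(0^n,0)=0^k$, $(0^n,2),\dots,(0^n,p'{-}1),(S(S(0^n)),p')$, where $p'=V(S(S(0^n)))\ge 2$, form a single valid path, while $(0^n,\pi)$ for $\pi\ge p'$ is a self-loop. A direct composition check using conditions~2 and~3 of $S'$ together with the matching conditions~2 and~3 of $P'$ shows $S'(P'(\xx))=\xx=P'(S'(\xx))$ for every non-$0^k$ vertex along this chain, so no T1 solution arises here. This leaves the main case $\uu\notin\{0^n,S(0^n)\}$, which I split on whether $\pi=V(\uu)$ or not. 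For $\pi\neq V(\uu)$, the vertex is an interior point of the chain implementing the edge $\uu\to S(\uu)$: either that edge is invalid in $\CI$ (so $(\uu,\pi)$ is a self-loop by condition~5 of $S'$ and condition~5(b) of $P'$, already excluded), or $(\uu,\pi)$ lies strictly between the chain's endpoints and by inspection of conditions~6--9 of $S'$ and conditions~5(c)--(e) of $P'$ we get $S'(P'(\xx))=\xx=P'(S'(\xx))$. For $\pi=V(\uu)$, the vertex $(\uu,V(\uu))$ is the canonical representative of $\uu$: $S'$ returns $\xx$ exactly when $\uu$ is a line-end in $\CI$ (i.e., $P(S(\uu))\neq\uu$ or $S(\uu)=\uu$) and otherwise walks toward $(S(\uu),V(S(\uu)))$, and symmetrically for $P'$ relative to $S(P(\uu))\neq\uu$. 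Thus $S'(P'(\xx))\neq\xx$ or $P'(S'(\xx))\neq\xx$ forces $S(P(\uu))\neq\uu\neq 0^n$ or $P(S(\uu))\neq\uu$, which is exactly the R1 condition for $\uu$ in $\CI$.

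The only real obstacle is a careful bookkeeping pass through the intermediate-vertex case $\pi\neq V(\uu)$, where the $S'$ and $P'$ rules branch on the sign of $V(\uu)-V(S(\uu))$ and on the boundary positions $\pi=V(\uu)\pm 1$ where the chain transitions into the canonical vertex of the neighbour. The argument is symmetric for ascending ($V(\uu)<V(S(\uu))$) and descending ($V(\uu)>V(S(\uu))$) chains, and once one verifies that $S'$ and $P'$ are mutually inverse on non-self-loop intermediate vertices, no spurious endpoints appear and the reduction from T1 to R1 falls out.
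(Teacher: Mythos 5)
Your proposal is correct and takes essentially the same route as the paper: both rest on the observation that the construction turns every valid edge of $\CI$ into a chain of valid edges in $\CI'$ and makes all other vertices $(\uu,\pi)$ self-loops, so no spurious starts or ends of lines are created and a T1 witness can only occur at $(\uu,V(\uu))$ with $\uu$ itself a start or end of a line in $\CI$ (the paper phrases this as a proof by contradiction, you as a direct case split, but the core argument is identical). One cosmetic remark: in your $\pi\neq V(\uu)$ case the stated dichotomy omits vertices whose outgoing edge is valid but whose $\pi$ lies outside the range between $V(\uu)$ and $V(S(\uu))$; these are self-loops by condition~7 of $S'$ and condition~5(c) of $P'$, so they are already covered by your initial exclusion of self-loops and no gap results.
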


\begin{proof}
Let $\Delta=2^m-1$.
In $\CI'$, clearly $(0^n,\pi)$ for any $\pi \in {1,\dots, \Delta}$ is not a start or end of a path, and $(0^n,0)$ is not an end of a path. Therefore, $\uu\neq 0^n$. Since $(S(0^n),\pi), \forall \pi\in \{0,\dots,\Delta\}$ are self loops, $\uu \neq S(0^n)$.

If to the contrary, $S(P(\uu))=\uu$ and $P(S(\uu))=\uu$. If $S(\uu)=\uu=P(\uu)$ then $(\uu,\pi),\ \forall \pi\in\{0,\dots,\Delta\}$ are self loops, a contradiction. \todo{I don't understand this line. Basically, the point found can't have corresponded to a self-loop because it would then have been a self-loop too.}

For the remaining cases, let $P'(S'(\xx))\neq \xx$, and let $\uu'=S(\uu)$. \todo{There must have been a edge in the original if there is a successor edge in the new instance}. There is a valid edge from $\uu$ to $\uu'$ in $\CI$. Then we will create valid edges from $(\uu,V(\uu))$ to $(S(\uu),V(S(\uu))$ with appropriately changing second coordinates. The rest of $(\uu,.)$ are self loops, a contradiction. 

Similar argument follows for the case when $S'(P'(\xx))\neq \xx$. 
\end{proof}

The basic idea behind the next lemma is that a T2 type solution in $\CI'$ has
potential $1$. Therefore, it is surely not a self loop. Then it is either an end of a path or near an end of a path, or else near a potential violation. 

\begin{lemma}\label{lem:p2m-t2}
Let $\xx=(\uu,\pi)$ be a type T2 solution of $\CI'$. Either $\uu \neq 0^n$ is start of a path in $\CI$ (type R1 solution), or $P(\uu)$ is an R1 or R2 type solution in $\CI$, or $P(P(\uu))$ is an R2 type solution in $\CI$.
\end{lemma}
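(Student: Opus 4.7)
The plan is a careful case analysis on the structure of $(\uu,\pi)$. Since $\xx = (\uu,\pi)$ is a T2 solution of $\CI'$, we have $\xx \neq 0^k$ and $V'(\xx) = 1$. The only ways for $V'$ to output $1$ are via the $\xx = 0^k$ branch (excluded) or via the fall-through branch, which requires $\pi = 1$ and that $\xx$ is not a full self-loop. I would first dispose of the trivial configurations: by the first clauses of $S'$ and $P'$, the vertex $(0^n, 1)$ is a self-loop and the vertices $(S(0^n),\cdot)$ are always self-loops, so $V'$ returns $0$ on these. Hence $\uu \neq 0^n$ and $\uu \neq S(0^n)$.

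The core case is $\pi = V(\uu) = 1$, i.e., $(\uu,1)$ is the canonical representative of $\uu$ in $\CI'$. I would branch on whether $\uu$ is a start of a path in $\CI$. If $S(P(\uu)) \neq \uu$, then $\uu \neq 0^n$ is itself a type R1 solution of $\CI$, giving option (a). Otherwise $\vv := P(\uu)$ satisfies $S(\vv) = \uu$ and $\vv \neq \uu$; moreover $\vv \neq 0^n$ since $\uu \neq S(0^n)$. Next I would split on $\vv$'s status. If $S(P(\vv)) \neq \vv$, then $\vv = P(\uu)$ is a type R1 solution, giving option (b). Otherwise the edge $P(\vv) \to \vv$ is valid in $\CI$; if $V(\vv) \geq 1 = V(\uu)$, then $V(S(\vv)) - V(\vv) = 1 - V(\vv) \leq 0$, making $\vv = P(\uu)$ a type R2 solution (option (b)). If instead $V(\vv) = 0$, letting $\ww := P(\vv)$, the valid edge $\ww \to \vv$ satisfies $V(S(\ww)) - V(\ww) = -V(\ww) \leq 0$, so $\ww = P(P(\uu))$ is R2, giving option (c).

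The remaining cases have $\pi = 1 \neq V(\uu)$. A direct inspection of $S'$ and $P'$ shows that $(\uu, 1)$ is non-self-loop under these conditions only when $\uu$ is the source of a valid edge in $\CI$ with either $V(\uu) = 0$ and $V(S(\uu)) \geq 2$, or $V(\uu) \geq 2$ and $V(S(\uu)) = 0$. Both sub-cases are handled by parallel arguments. In the first, either $\uu$ is a start of a path (option (a)) or $\vv = P(\uu)$ has $V(\vv) \geq 0 = V(\uu)$, making $\vv$ an R2 solution (option (b)). In the second, $\uu$ itself is already R2, and a similar analysis on $P(\uu)$ and $P(P(\uu))$ recovers an explicit R1 or R2 witness matching options (b) or (c).

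The main obstacle will be the bookkeeping across sub-cases, especially confirming that corner cases---such as $\vv = 0^n$ (which would contradict $\uu \neq S(0^n)$) or various degenerate potentials---are safely excluded, and that the walk backwards from $(\uu, 1)$ in $\CI'$ to an R1/R2 witness in $\CI$ stays within $\uu$, $P(\uu)$, $P(P(\uu))$ as claimed. The guiding intuition is that a T2 vertex in $\CI'$ is at potential exactly one, so it lies only a handful of $\CI'$-edges away from a point where a $\CI$-edge either starts a new line or decreases in potential.
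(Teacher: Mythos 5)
Your proof follows the same general strategy as the paper's: observe that a T2 solution $(\uu,\pi)$ of $\CI'$ has $\pi=1$ and is not a self-loop, rule out the trivial $\uu\in\{0^n,S(0^n)\}$, and then argue by cases on where $(\uu,1)$ lives in the chain structure. Your treatment of the main case $\pi=V(\uu)=1$ is more detailed than the paper's and is correct; the same is true of your first $\pi\neq V(\uu)$ sub-case ($V(\uu)=0$, $V(S(\uu))\ge 2$), where the observation $V(P(\uu))\ge 0=V(\uu)$ cleanly yields an R2 witness at $P(\uu)$ unless $\uu$ is a start of a line.

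The gap is in your final sub-case, $V(\uu)\ge 2$ and $V(S(\uu))=0$. You correctly note that $\uu$ itself is an R2 solution (the valid edge $\uu\ra S(\uu)$ has $V(S(\uu))-V(\uu)=-V(\uu)<0$), but the lemma's listed alternatives are ``$\uu$ is a start of a line'', ``$P(\uu)$ is R1/R2'', or ``$P(P(\uu))$ is R2'', and $\uu$ being R2 is not one of them. Your follow-up claim that ``a similar analysis on $P(\uu)$ and $P(P(\uu))$ recovers an explicit R1 or R2 witness matching options (b) or (c)'' is not substantiated and does not hold in general: if $\uu$ is not a start of a line, $V(P(\uu))<V(\uu)$, $P(\uu)$ is not a start of a line, and $V(P(P(\uu)))<V(P(\uu))$, then none of (a)--(c) apply while $\uu$ is still R2. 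What saves the reduction in this sub-case is $\uu$ being R2, so the right fix is to say so explicitly and note that the lemma's wording is slightly narrower than what the proof actually delivers. (For what it is worth, the paper's own proof has a related blind spot here: it asserts that $V(P(\uu))>\pi$ forces $V(\uu)\le\pi$, which fails precisely when $V(\uu)\ge 2$ and $V(S(\uu))=0$, so neither argument handles this sub-case as literally stated.)
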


\begin{proof}
Clearly $\uu \neq 0^n$, and $\xx$ is not a self loop, i.e., it is not a dummy vertex with irrelevant value of $\pi$. Further, $\pi=1$. If $\uu$ is a start or end of a path in $\CI$ then done. 

Otherwise, if $V(P(\uu))>\pi$ then we have $V(\uu)\le \pi$ and hence $V(\uu)-V(P(\uu))\le 0$ giving $P(\uu)$ as an R2 type solution of $\CI$. 
If $V(P(\uu))<\pi=1$ then $V(P(\uu))=0$. Since potential can not go below zero, either $P(\uu)$ is an end of a path, or for $\uu''=P(P(\uu))$ and $\uu'=P(\uu)$ we have $\uu'=S(\uu'')$ and $V(\uu')-V(\uu'')\le 0$, giving $\uu''$ as a type R2 solution of $\CI$.
\end{proof}

At a type T3 solution of $\CI'$ potential is strictly positive, hence these solutions are not self loops. If they correspond to potential violation in $\CI$ then we get a type R2 solution. But this may not be the case, if we made $S'$ or $P'$ self pointing due to end or start of a path respectively. In that case, we get a type R1 solution. The next lemma formalizes this intuition. 

\begin{lemma}\label{lem:p2m-t3}
Let $\xx=(\uu,\pi)$ be a type T3 solution of $\CI'$. If $\xx$ is a start or end of a path in $\CI'$ then $\uu$ gives a type R1 solution in $\CI$. Otherwise $\uu$ gives a type R2 solution of $\CI$.
\end{lemma}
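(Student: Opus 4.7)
The plan is to trace the T3 violation at $\xx = (\uu,\pi)$ back through the construction of $\CI'$, using that $V'$ outputs $0$ at self-loops, $1$ at $0^k$, and $\pi$ at all other vertices. Since $\xx$ is a T3 solution we have $V'(\xx) > 0$, so $\xx$ is not a self-loop. A direct computation also rules out $\xx = 0^k$: both possible values $S'(0^k) \in \{(S(S(0^n)),2),(0^n,2)\}$ have $V'$-value exactly $2$, while $V'(0^k) = 1$, so the successor condition gives difference exactly $1$, and the predecessor condition requires $V'(\xx) > 1$. Consequently $V'(\xx) = \pi > 0$.

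First suppose $\xx$ is a start or end of a path in $\CI'$. I would enumerate the self-pointing situations in $S'$ and $P'$: the dummies $(0^n,1)$ and $(S(0^n),\pi)$, the out-of-range $\pi$ branches (case 7 of $S'$ and case 5c of $P'$) and the initial vertex $0^k$ all self-point in \emph{both} $S'$ and $P'$ simultaneously. Hence a vertex $\xx \neq 0^k$ that self-points under only one of $S'$, $P'$ must have been produced by one of the two dedicated bullets of the construction, which create $(\uu, V(\uu))$ as a start (resp.\ end) precisely when $\uu \neq 0^n$ is a start (resp.\ end) of a path in $\CI$. Thus $\uu$ satisfies the R1 condition in $\CI$.

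Next suppose $\xx$ is strictly internal to a line of $\CI'$. Then $\xx$ lies on one of the chains used to encode a valid edge $\vv \to \vv'$ of $\CI$, or on the initial chain leaving $0^k$. On increasing chains, where $V(\vv) < V(\vv')$, and on the initial chain, consecutive $V'$-values differ by exactly $1$, so T3 cannot fire. Hence at least one chain incident to $\xx$ is non-increasing ($V(\vv) \ge V(\vv')$), and the source $\vv$ of that chain is an R2 witness in $\CI$, since $\vv \neq S(\vv)$, $P(S(\vv)) = \vv$ (the edge is valid), and $V(S(\vv)) \le V(\vv)$. In the generic situation the first coordinate of $\xx$ \emph{is} $\vv$, so $\uu = \vv$ directly works; in the boundary situation where $\xx$ is the terminal vertex $(\vv', V(\vv'))$ of a non-increasing chain and T3 fires only through the predecessor, we instead take $P(\uu)$ in $\CI$ to recover $\vv$ as the R2 solution.

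The main technical obstacle is keeping the case analysis of $S'$ and $P'$ honest, in particular verifying that every dummy vertex or out-of-range index produces simultaneous self-loops in $S'$ and $P'$ so that it cannot masquerade as a start or end of a path. Once this bookkeeping is complete, the remainder of the proof amounts to reading off the chain definitions: they were engineered so that a unit increment of $V'$ corresponds to one step along an \emph{increasing} edge of $\CI$, and any deviation forces a non-increasing edge, which is exactly the R2 condition.
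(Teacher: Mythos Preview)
Your proposal is correct and follows essentially the same line as the paper's proof: both use $V'(\xx)>0$ to rule out self-loops, then split on whether $\xx$ is an endpoint in $\CI'$ (tracing this back to $\uu$ being an endpoint in $\CI$, hence R1) versus $\xx$ being internal (where a T3 violation forces a non-increasing step on the underlying edge of $\CI$, hence R2). Your treatment is in fact more careful than the paper's---you explicitly dispose of $\xx=0^k$ and you correctly flag the boundary case at $\xx=(\uu,V(\uu))$ where the predecessor clause of T3 fires and the R2 witness is $P(\uu)$ rather than $\uu$ itself; the paper's proof makes the same move (writing ``$(\uu'',\pi'')=P'(\xx)$'' gives the witness), so the lemma statement's phrase ``$\uu$ gives'' should be read as ``from $\uu$ one recovers''.
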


\begin{proof}
Since $V'(\xx)>0$, it is not a self loop and hence is not dummy, and $\uu\neq 0^n$. If $\uu$ is start or end of a path then $\uu$ is a type R1 solution of $\CI$. Otherwise, there are valid incoming and outgoing edges at $\uu$, therefore so at $\xx$. 

If $V((S(\xx))-V(\xx)\neq 1$, then since potential either remains the same or increases or decreases exactly by one on edges of $\CI'$, it must be the case that $V(S(\xx))-V(\xx)\le 0$. This is possible only when $V(S(\uu))\le V(\uu)$. Since $\uu$ is not an end of a path we do have $S(\uu)\neq \uu$ and $P(S(\uu))=\uu$. Thus, $\uu$ is a type T2 solution of $\CI$.

If $V((\xx)-V(P(\xx))\neq 1$, then by the same argument we get that for $(\uu'',\pi'')=P(\uu)$, $\uu''$ is a type R2 solution of $\CI$. 
\end{proof}

Our main theorem follows using Lemmas \ref{lem:p2m-valid}, \ref{lem:p2m-t1}, \ref{lem:p2m-t2}, and \ref{lem:p2m-t3}.

\begin{theorem}\label{thm:p2m}
An instance of \EOPL can be reduced to an instance of \EOML in polynomial time such that a solution of the former can be constructed in a linear time from the solution of the latter. 
\end{theorem}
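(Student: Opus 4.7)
The plan is to combine the construction of $\CI'$ given above with the three solution-mapping lemmas, after first disposing of degenerate inputs.

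First, I would preprocess the input: given the \EOPL instance $\CI = (S,P,V)$ on $\{0,1\}^n$, I check whether $0^n$ or $S(0^n)$ is itself a type R1 or R2 solution of $\CI$. If so, I output a trivial \EOML instance and map its canonical solution back to the witness I already have. Otherwise, by Lemma~\ref{lem:valid-edges}, the edges $0^n \to S(0^n) \to S(S(0^n))$ are valid and $V(S(S(0^n))) \ge 2$, which is exactly what the construction of $S'$, $P'$, $V'$ (in particular the bridging edges from $(0^n,0)$ through $(0^n,\pi)$ to $(S(S(0^n)),V(S(S(0^n))))$) requires in order to guarantee $V'(0^k)=1$ and $S'(0^k)\neq 0^k$.

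Next, I would appeal to Lemma~\ref{lem:p2m-valid} to certify that the triple $(S',P',V')$ defines a legal \EOML instance on $\{0,1\}^k$ with $k=n+m$: the boundary conditions $P'(0^k)=0^k \neq S'(0^k)$ and $V'(0^k)=1$ follow directly from the first three clauses of the procedures, and each of $S', P', V'$ makes $O(1)$ calls to $S,P,V$ plus arithmetic on $m$-bit integers, so the construction runs in time polynomial (indeed linear up to the cost of one circuit evaluation) in $|\CI|$.

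Then I would describe the solution extractor $g$. Given a solution $\xx = (\uu,\pi)$ of $\CI'$, the extractor first determines which of the three \EOML solution types T1, T2, T3 the witness $\xx$ certifies, and then invokes the corresponding lemma: Lemma~\ref{lem:p2m-t1} in the T1 case directly returns an R1 witness $\uu$ of $\CI$; Lemma~\ref{lem:p2m-t2} in the T2 case returns one of $\uu$, $P(\uu)$, or $P(P(\uu))$ as an R1 or R2 witness; Lemma~\ref{lem:p2m-t3} in the T3 case returns either $\uu$ as R1 or $P(\uu)$ as R2. All three mappings require only $O(1)$ evaluations of $S$, $P$, $V$, so $g$ runs in linear time. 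The main obstacle, which is already absorbed into those three lemmas, is the case analysis needed to see that the self-loops we introduced for dummy coordinates and for invalid edges never create genuinely new T1/T2/T3 witnesses: every such witness of $\CI'$ is either itself the image of a valid edge of $\CI$, or lives in the immediate neighbourhood of one, and the bounded backward search ($\uu \to P(\uu) \to P(P(\uu))$) suffices to recover an R1 or R2 witness. Combining the construction, the validity lemma, and the three extraction lemmas yields the theorem.
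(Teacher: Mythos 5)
Your proposal is correct and follows essentially the same route as the paper: after preprocessing to discard the trivial cases (justified by Lemma~\ref{lem:valid-edges}), you invoke Lemma~\ref{lem:p2m-valid} for validity of the constructed instance and Lemmas~\ref{lem:p2m-t1}, \ref{lem:p2m-t2}, \ref{lem:p2m-t3} to pull T1, T2, T3 witnesses of $\CI'$ back to R1 or R2 witnesses of $\CI$ with a bounded backward search, which is exactly how the paper concludes the theorem.
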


\section{Pseudo-code for Lemke's algorithm}
\label{app:lemke}

\begin{tabular}{|l|}
\hline
\hspace{5pt} {\bf If} $\qq\ge 0$ {\bf then} {\bf Return} $\yy\leftarrow \zeros$ \\
\hspace{5pt} $\yy\leftarrow 0, z\leftarrow |\min_{i \in [d]} q_i|, \ps=\qq+z\ones$\\
\hspace{5pt} $i\leftarrow $ duplicate label at vertex $(\yy,\ps,z)$ in $\CPol$. $flag\leftarrow 1$ \\
\hspace{5pt} {\bf While} $z>0$ {\bf do}\\
\hspace{10pt} {\bf If} $flag=1$ {\bf then} set $(\yy',\ps',z')\leftarrow $ vertex obtained by relaxing $y_i=0$ at $(\yy,\ps,z)$ in $\CPol$\\
\hspace{10pt} {\bf Else} set $(\yy',\ps',z')\leftarrow $ vertex obtained by relaxing $s_i=0$ at $(\yy,\ps,z)$ in $\CPol$\\
\hspace{10pt} {\bf If} $z>0$ {\bf then}\\
\hspace{15pt} $i \leftarrow $ duplicate label at $(\yy',\ps',z')$\\
\hspace{15pt} {\bf If} $v_i>0$ and $v'_i=0$ {\bf then} $flag\leftarrow 1$. {\bf Else} $flag\leftarrow 0$\\
\hspace{15pt} $(\yy,\ps,z)\leftarrow(\yy',\ps',z')$\\
\hspace{5pt} End {\bf While} \\
\hspace{5pt} {\bf Return} $\yy$\\
\hline
\end{tabular}

\newpage

\section{Missing Procedures and Proofs from Section~\ref{sec:PLCPtoEOPL}}\label{app:PLCPtoEOPL}
\subsection{Procedures $\isvalid$, $\ite$, and $\eti$}\label{app:proc}
\begin{table}[!hbt]
\caption{Procedure \isvalid(\uu)}\label{tab:iv}
\begin{tabular}{|l|}
\hline
\hspace{5pt} {\bf If} $\uu=0^{n}$ {\bf then} {\bf Return} 1\\
\hspace{5pt} {\bf Else} let $\tau = (u_{(d+1)}+\dots+u_{2d})$\\
\hspace{15pt} {\bf If} $\tau> 1$ {\bf then} {\bf Return} 0\\
\hspace{15pt} Let $S\leftarrow \emptyset$. \% set of tight inequalities. \\
\hspace{15pt} {\bf If} $\tau = 0$ {\bf then} $S=S\cup \{ z=0\}$. \\
\hspace{15pt} {\bf Else}\\
\hspace{30pt} Set $l\leftarrow $ index of the non-zero coordinate in vector $(u_{(d+1)},\dots,u_{2d})$. \\
\hspace{30pt} Set $S=\{y_l=0, s_l=0\}$.\\
\hspace{15pt} {\bf For} each $i$ from $1$ to $d$ {\bf do} \\
\hspace{30pt} {\bf If} $u_i=0$ {\bf then} $S=S\cup \{y_i=0\}$, {\bf Else} $S=S\cup \{s_i=0\}$\\
\hspace{15pt} Let $A$ be a matrix formed by lhs of equalities $M\yy+\ps -\ones z=\qq$ and that of set $S$\\
\hspace{15pt} Let $\bb$ be the corresponding rhs, namely $\bb=[\qq; \zeros_{d\times 1}]$.\\
\hspace{15pt} Let $(\yy',\ps',z') \leftarrow \bb * A^{-1}$\\
\hspace{15pt} {\bf If} $(\yy',\ps',z') \in \CPol$ {\bf then} {\bf Return} 1, {\bf Else} {\bf Return} 0 \\
\hline
\end{tabular}
\end{table}

\begin{table}[!hbt]
\caption{Procedures \ite(\uu) and \eti(\yy,\ps,z)}\label{tab:ei}
\begin{tabular}{|l|}
\hline
\begin{tabular}{l}
$\ite(\yy,\ps,z)$ \\ \hline
\hspace{5pt} {\bf If} $\exists i \in [d]$ s.t. $y_i * s_i \neq 0$ {\bf then} {\bf Return} $(\zeros_{(2d-2)\times 1};1;1)$ \% Invalid \\
\hspace{5pt} Set $\uu\leftarrow \zeros_{2d\times 1}$. Let $DL=\{i\in [d]\ |\ y_i=0\mbox{ and } s_i=0\}$.\\
\hspace{5pt} {\bf If} $|DL|>1$ {\bf then} {\bf Return} $(\zeros_{(2d-2)\times 1};1;1)$ \%In valid \\
\hspace{5pt} {\bf If} $|DL|=1$ {\bf then} for $i\in DL$, set $u_i\leftarrow 1$\\
\hspace{5pt} {\bf For} each $i\in [d]$ {\bf If} $s_i=0$ {\bf then} set $u_{d+i}\leftarrow 1$\\
\hspace{5pt} {\bf Return} $\uu$
\end{tabular}
\\ \hline
\begin{tabular}{l}
$\eti(\uu)$  \\ \hline
\hspace{5pt} {\bf If} $\uu=0^n$ {\bf then} {\bf Return} $(\zeros_{d \times 1}, \qq+z^0+1, z^0+1)$ \% This case will never happen\\
\hspace{5pt} {\bf If} \isvalid(\uu)=0 {\bf then} {\bf Return} $\zeros_{(2d+1) \times 1}$\\
\hspace{5pt} Let $\tau = (u_{(d+1)}+\dots+u_{2d})$\\
\hspace{5pt} Let $S\leftarrow \emptyset$. \% set of tight inequalities. \\
\hspace{5pt} {\bf If} $\tau = 0$ {\bf then} $S=S\cup \{ z=0\}$. \\
\hspace{5pt} {\bf Else}\\
\hspace{15pt} Set $l\leftarrow $ index of non-zero coordinate in vector $(u_{(d+1)},\dots,u_{2d})$. \\
\hspace{15pt} Set $S=\{y_l=0, s_l=0\}$.\\
\hspace{5pt} {\bf For} each $i$ from $1$ to $d$ {\bf do} \\
\hspace{15pt} {\bf If} $u_i=0$ {\bf then} $S=S\cup \{y_i=0\}$, {\bf Else} $S=S\cup \{s_i=0\}$\\
\hspace{5pt} Let $A$ be a matrix formed by lhs of equalities $M\yy+\ps -\ones z=\qq$ and that of set $S$\\
\hspace{5pt} Let $\bb$ be the corresponding rhs, namely $\bb=[\qq; \zeros_{d\times 1}]$.\\
\hspace{5pt} {\bf Return} $\bb * A^{-1}$\\
\end{tabular}\\
\hline
\end{tabular}
\end{table}

\subsection{Proof of Lemma~\ref{lem:vert}}

\textbf{Lemma \ref{lem:vert}} (restated) : \emph{
If $\isvalid(\uu)=1$ then $\uu=\ite(\eti(\uu))$, and the corresponding vertex $(\yy,\ps,z)\in \eti(\uu)$ of $\CPol$ is feasible in (\ref{eq:c}). If $(\yy,\ps,z)$ is a feasible vertex of (\ref{eq:c}) then $\uu=\ite(\yy,\ps,z)$ is a valid configuration, {\em i.e.,} $\isvalid(\uu)=1$.
}
\begin{proof}
The only thing that can go wrong is that the matrix $A$ generated in $\isvalid$ and $\eti$ procedures are singular, or the set of double labels $DL$ generated in $\ite$ has more than one elements. 
Each of these are possible only when more than $2d+1$ equalities of $\CPol$ hold at the corresponding point $(\yy,\ps,z)$, violating non-degeneracy assumption. 
\end{proof}

\subsection{Proof of Lemma~\ref{lem:PSF}}

\textbf{Lemma \ref{lem:PSF}} (restated) : \emph{
Functions $P$, $S$ and $\pot$ of instance $\CE$ are well defined, making $\CE$ a valid \EOPL instance. 
}
\begin{proof}
Since all three procedures are polynomial-time in $\CL$, they can be defined
by $poly(\CL)$-sized Boolean circuits. Furthermore, for any $\uu \in \vert$,
we have that $S(\uu),P(\uu) \in \vert$. For~$\pot$, 
since the value of $z \in [0,\ \Delta-1]$, we
have $0\le \Delta^2(\Delta-z)\le \Delta^3$. Therefore, $\pot(\uu)$ is an
integer that is at most $2 \cdot \Delta^3$ and hence is in set $\{0,\dots, 2^m-1\}$. 
\end{proof}

\subsection{Proof of Lemma~\ref{lem:pot}}

\textbf{Lemma \ref{lem:pot}} (restated) : \emph{
Let $\uu \neq \uu'$ be two valid configurations, i.e.,
	$\isvalid(\uu)=\isvalid(\uu')=1$, and let $(\yy,\ps,z)$ and $(\yy',\ps',z')$
	be the corresponding vertices in $\CPol$. Then the following holds: $(i)$
	$\pot(\uu)=\pot(\uu')$ iff $z=z'$. $(ii)$ $\pot(\uu)>\pot(\uu')$ iff $z<z'$.
}
\begin{proof}
Among the valid configurations all except $\zeros$ has positive $\pot$ value. Therefore, wlog let $\uu,\uu'\neq \zeros$. For these we have $\pot(\uu)=\lfloor \Delta^2*(\Delta -z)\rfloor$, and $\pot(\uu')=\lfloor \Delta^2*(\Delta -z')\rfloor$. 

Note that since both $z$ and $z'$ are coordinates of vertices of $\CPol$, whose description has highest coefficient of $\max\{\max_{i,j\in [d]} M(i,j),\max_{i\in [d]} |q_i|\}$, and therefore their numerator and denominator both are bounded above by $\Delta$. Therefore, if $z< z'$ then we have 
\[
z'-z\ge \frac{1}{\Delta^2} \Rightarrow ((\Delta-z) - (\Delta - z'))*\Delta^2 \ge 1 \Rightarrow \pot(\uu)-\pot(\uu') \ge 1.
\]

For $(i)$, if $z=z'$ then clearly $\pot(\uu)=\pot(\uu')$, and from the above argument it also follows that if $\pot(\uu)= \pot(\uu')$ then it can not be the case that $z\neq z'$. Similarly for $(ii)$, if $\pot(\uu)>\pot(\uu')$ then clearly, $z'>z$, and from the above argument it follows that if $z'>z$ then it can not be the case that $\pot(\uu')\ge \pot(\uu)$. 
\end{proof}

\end{document}